\documentclass[onefignum,onetabnum]{siamart190516}

\headers{Local Flow Partitioning for Faster Edge Connectivity}{M. Henzinger, S. Rao, and D. Wang}

\title{Local Flow Partitioning for Faster Edge Connectivity\thanks{The research leading to these results has received funding from the European Research Council 
under the European Union's Seventh Framework Programme (FP/2007-2013) / ERC Grant
Agreement no. 340506 and was done in part while Monika Henzinger was visiting the Simons Institute of the Theory of Computing. Satish Rao and Di Wang were supported by NSF Grant NSF Grant CCF-1528174 and CCF-1535989 during this work.}}

\author{Monika Henzinger\thanks{University of Vienna, 
  (\email{monika.henzinger@univie.ac.at}).}
\and Satish Rao\thanks{UC Berkeley 
  (\email{satishr@cs.berkeley.edu}).}
\and Di Wang\thanks{UC Berkeley 
  (\email{wangd@eecs.berkeley.edu}).}}

%
%

\usepackage{amsmath}
\usepackage{mathtools,amsfonts,enumerate,enumitem}
\usepackage{thmtools, thm-restate}

\theoremstyle{plain}
\theoremheaderfont{\normalfont\sc}
\theorembodyfont{\normalfont\itshape}
\theoremseparator{.}
\theoremsymbol{}
\newsiamthm{mylemma}{Lemma}
\newsiamthm{observation}{Observation}

\newcommand{\vol}{{\hbox{\bf vol}}}

\newcommand{\cc}{{0.6}}
\newcommand{\GG}{{\overline{G}}}
\newcommand{\core}{{4}}
\newcommand\myeq{\mathrel{\overset{\makebox[0pt]{\mbox{\normalfont\tiny\sffamily def}}}{=}}}

\newcommand{\tab}{.\hskip.1in}

\DeclarePairedDelimiter\ceil{\lceil}{\rceil}
\DeclarePairedDelimiter\floor{\lfloor}{\rfloor}
\newcommand{\ex}{\operatorname{ex}}
\newcommand{\ind}{\operatorname{in}}
\newcommand{\outd}{\operatorname{out}}

\begin{document}

\maketitle{}

\begin{abstract} 
We study the problem of computing a minimum cut in a simple, undirected graph and give a deterministic  $O(m \log^2 n \log\log^2 n)$ time algorithm. This improves both on the best previously known deterministic running time of $O(m \log^{12} n)$ (Kawarabayashi and Thorup \cite{KT18}) and the best previously known randomized running time of $O(m \log^{3} n)$ (Karger \cite{KargerMinCut00}) for this problem, though Karger's algorithm can be further applied to weighted graphs. Moreover, our result extends to {\em balanced} directed graphs, where the {\em balance} of a directed graph captures how close the graph is to being Eulerian.

Our approach is using the Kawarabayashi and Thorup graph compression technique, which repeatedly finds low-conductance cuts. To find these cuts they use a diffusion-based local algorithm. We use instead a flow-based local algorithm and suitably adjust their framework to work with our flow-based subroutine. Both flow and diffusion based methods have a long history of being
applied to finding low conductance cuts. Diffusion algorithms have
several variants that are naturally local while it is more complicated to make flow methods local. Some prior work has proven
nice properties for local flow based algorithms with respect to
improving or cleaning up low conductance cuts. Our flow subroutine,
however, is  the first that is both local and produces low
conductance cuts. Thus, it may be of independent interest.
\end{abstract}

\section{Introduction}

Given an unweighted (or simple) graph $G= (V,E)$ with $n = |V|$ and $m = |E|$, the edge connectivity $\lambda$ of $G$ is the size of the smallest  edge set whose removal disconnects the graph.
Given an edge-weighted graph $G_w=(V,E,w)$ the minimum cut of $G_w$ is the weight of the minimum weight edge set whose removal disconnects the graph.
In a breakthrough paper in 1996, Karger~\cite{KargerMinCut00} gave the first randomized algorithm that computes the minimum cut in expected near-linear time and posed as an open question to find a deterministic  near-linear time minimum cut algorithm. Almost 20 years later, in a recent breakthrough,
Kawarabayashi and Thorup \cite{KT18} partially answered his open question, by presenting 
the first deterministic near-linear time algorithm for finding edge
connectivity in an unweighted simple graph.  They state their runtime as  $O(m \log^{12}n)$ .

Their contribution is on two levels. They improved the deterministic
runtime for edge connectivity to near linear time, and perhaps of more
or equal interest they developed a new deterministic algorithm that
computes from $G$ a sparser multi-graph $\overline{G}$ that preserves
all non-trivial minimum cuts in $G$, i.e., a deterministic sparsification of
$G$. We note that $\overline{G}$ is produced by a recursive procedure
and we refer to either $\overline{G}$ or the procedure as the {\em K-T
  decomposition.}
Applying Gabow's edge connectivity algorithm~\cite{Gabow91} (which
runs on multi-graphs) to $\overline{G}$ yields the claimed results for
computing edge connectivity.

We believe the K-T decomposition is of independent interest based on
the long line of research on sparsification and clustering and the
astounding impact in algorithms sparsification and clustering have
had. For example, the near-linear time solvers for linear systems
\cite{ST04,KMP,KOSZ} is one of the very important applications of
sparsification. This specific application as well as a large part of
the prior work on sparsification is based on randomization.  As the
K-T decomposition is deterministic and it introduces some quite
interesting ideas and structures, we believe it might well prove
useful in improving the state of the art with respect to the
co-evolution of graph decompositions and algorithms, and specifically
the use of deterministic sparsification in various algorithms.

A central tool used in \cite{KT18} to compute the K-T decomposition is
a local {\em probability mass diffusion method}, called a ``page
rank'' method.  We replace this diffusion method by a flow-based
method and modify the K-T algorithm to accomodate the differences
between these methods. As a result we derive an algorithm that has a
deterministic runtime of $O(m \log^2 n \log\log^2 n)$ for computing a
K-T decomposition and the edge connectivity in $G$. 
Note that our deterministic algorithm is faster than 
 the best known randomized edge connectivity algorithm, whose
 running time is
$O(m \log^3n)$~\cite{KargerMinCut00}.

Furthermore, the same approach also gives an efficient deterministic algorithm for simple directed graphs when a directed graph is {\em balanced}. The notion of {\em balance} is introduced in~\cite{EGJS16}, and it quantifies how close a directed graph is to being Eulerian. A directed graph is {\em $\alpha$-balanced} if for any cut $S$ in the graph, both the in-degree and out-degree of $S$ are at least $1/\alpha$ fraction of the sum of the in-degree and out-degree, i.e. the total number of edges across the cut $S$ (See Section~\ref{sxn:prelim} for formal definitions). Balanced directed graphs give natural transition from undirected graphs to directed graphs, as we can turn an undirected graph into a $2$-balanced directed graph by replacing each undirected edge by two directed edges in opposite directions, and the size of any cut remains the same. The efficiency of our algorithm is parameterized by the balance, and the running time is $O(\alpha^4 m \ln^{2}m\ln\ln^2 m)$ for an $\alpha$-balanced directed graph. 

In this paper, we present the more general algorithm for $\alpha$-balanced directed graphs rather than the algorithm for undirected graphs. A previous version of this paper appeared in SODA $2017$, and a slightly simpler algorithm that only works for undirected graphs was presented.  

{\bf Flows versus diffusion methods.}  From a technical point of view
our result contributes to the line of work on finding low conductance
cuts using local methods;
a {\em local method} being one whose
runtime depends only on the volume of the (smaller side of the) cut that it outputs. 
Flow and probability mass diffusions (or
more generally, spectral methods) have a long history of competing to
provide good graph decompositions. But diffusions have the upper hand
in terms of local methods, as the
fact that the diffusion process is a linear operator allows for the
detailed knowledge of its evolution, which then can be used to reason powerfully
about its behavior.  For example, Spielman and Teng \cite{ST04},
inspired by classical analysis of random walks by Lov{\'{a}}sz and
Simonovitz \cite{LS93}, showed that most vertices in a low conductance
cut are good starting points for a diffusion that finds a good cut.  
Past flow based methods, however, were subject
to the black-box of a flow algorithm that could adversarly send flow
in an inopportune direction. We overcome this difficulty by suitably 
modifying the flow computation and present
the first primarily flow-based local method for locating low
conductance cuts. We expect that our approach can used to 
speed up further conductance-based graph algorithms.

Previously, the methods of \cite{OSVV,KRV} combined the properties of
diffusions and flow algorithms to produce low conductance cuts. The
methods alternate between diffusions which find barely non-trivial
cuts in an embedded graph, and flow embedding edges to cross these
cuts. The flow computation interacts with the original graph without
the quadratic (in conductance) loss that is inherent in diffusion
methods\footnote{Indeed, one could also see the best known
  approximation algorithm for conductance in \cite{ARV} as such a
  combination.}.  Those methods, however, fail terribly to give local
methods for finding low conductance cuts, and explictly treat the flow
algorithm more as an constrained adversary rather a useful tool.

Our flow algorithm attempts to combine (some of) the power of
diffusion with the speed and efficiency of flow methods more tightly,
without actually using a diffusion method.  Basically it consists of
an excess scaling method repeatedly calling a modified push-relabel
algorithm.  The excess scaling portion enforces locality on our
algorithm and the details of our push-relabel method allow us to get
a handle on conductance. 

In recent work, some local flow based methods have been studied in a
similar vein. However, diffusions are still used when producing low
conductance cuts. For example, Orecchia and Zhu~\cite{OZ14} use a
detailed view of a blocking flow based flow algorithm to obtain
improved results on finding low conductance cuts; in particular, they
show how to locally find a $\tilde{O}(\frac{1}{\gamma})$ approximation
to conductance given a seed set overlapping the cut by a $\gamma$
fraction. They apply their method to local partitioning but use a
result of Allen, Lattanzi and Mirrokni \cite{ALM13} which in turn uses
diffusion or page-rank from~\cite{ACL06} (and as
does~\cite{KT18}).  

In the context of \cite{KT18}, our flow method roughly matches  diffusion
where it does well, and  dominates
diffusion with respect to its quadratic loss.  That is, the
decomposition developed in \cite{KT18}, repeatedly finds cuts of low
conductance $O(\frac{1}{\log^c n})$, or certifies a certain property
related to connectivity.  In \cite{KT18}, the local diffusions suffer
both a quadratic gap (as well as a logarithmic factor) between what
can be certified and the conductance of a cut as well as quadratic (in the conductance)
overhead in runtime. Our modified push-relabel algorithm
either certifies the property or finds a low conductance cut with only a logarithmic gap.
However, its runtime depends on the amount of ``source supply'' provided to it.
To make sure that this ``source supply'' is only $O(m \log n)$ we use the 
excess scaling procedure, which repeatedly calls the
push-relabel algorithm with suitably rescaled  source supply.
  This leads to the improvement in runtime for our
method. 

{\bf Other Previous Work.}
Work on edge connectivity and its generalization, the minimum cut
problem, has a long history perhaps beginning with Gomory and
Hu's \cite{GH61} use of the maximum flow problem to solve this problem.
Some relatively recent highlights include the work of Nagamochi
and Ibaraki \cite{NI90} which bypasses the use of the maximum
flow problem, and simple beautiful versions of these
by Frank \cite{Frank} and Stoer and Wagner \cite{SW}
which give $O(nm + n^2\log n)$ deterministic algorithms for minimum cut.

For edge connectivity of simple graphs, Gabow had the best previous
deterministic algorithm which was $O(m + \lambda^2 n)$ time
where $\lambda$ was the connectivity.  His methods could handle
parallel edges in $O(m + \lambda^2 n \log n)$ time.  Matula~\cite{Matula93}
has a linear time $(2+ \epsilon)$ approximation algorithm for this problem
as well. 

There is also substantial work in local graph partitioning including
the aforementioned work of Anderson, Chung, and Lang \cite{ACL06}
which gives a local diffusion process that outputs a set of conductance
$(\phi \log n)^{-1/2}$ in time $O(\phi^{-1} \log^c n)$ times
the size of the output for a good fraction of the starting
vertices in a cut of conductance $\phi$.  The runtime overhead was
improved to $\phi^{-1/2}$ using an evolving set diffusion
by Anderson and Peres \cite{AP}.  The heat kernel diffusion was 
used to improve the quality of the cut to $\phi^{-1/2}$ in \cite{C09},
though the impact on runtime overhead is not clear in that work. 
We note that the result of local diffusions have also had impact
empirically in, for example, the use of Personalized Page Pank \cite{PP15}.

{\bf Applications of our flow method beyond edge connectivity.} After this work was announced at SODA'17~\cite{HRW17}, the flow-based subroutine we developed has found further applications in other graph decomposition problems. In particular,~\cite{WangFHMR17} showed that an adaptation of our flow method has better theoretical and empirical behavior comparing to random walk based diffusion~\cite{ST08,ACL06,ZhuLM13} in the context of local graph clustering. Moreover, in the context of expander decomposition~\cite{KVV04}, an adaptation of our flow method was the key ingredient in the first nearly linear time algorithm to partition a graph into components with large internal conductance~\cite{SW19}, which improves the seminal result of Spielman and Teng~\cite{ST08}, where in their work the components as induced subgraph are not guaranteed to have large internal conductance. The stronger guarantee of~\cite{SW19} has led to further applications in the design of fast algorithms on graphs. 

{\bf Organization of Paper.}
In Section~\ref{sxn:prelim} we introduce the relevant definitions and notations. We present our flow procedure along
with its analysis in Section~\ref{sxn:flow}. We describe the
overall structure of the K-T decomposition in
Section~\ref{sxn:outerloop}, with some details deferred to the end of the section (Section~\ref{app:section2}).  We then present our version of the K-T
inner procedure in Section~\ref{sxn:inner-procedure}, and a detailed analysis in Section~\ref{app:inner-procedure}.  Finally,
Section~\ref{sxn:runtime} contains the running time analysis.  


\section{Preliminaries and notations}
\label{sxn:prelim}
For an undirected graph $G=(V,E)$, we denote $d(v)$ as the degree $v$, and $\vol(C)$ as the {\em volume} of $C\subseteq V$, which is $\sum_{v\in C}d(v)$. A {\em cut} is a subset $S\subset V$, or $(S, \bar{S})$ where $\bar{S} = V\setminus S$, and $\bar{S},S \neq \emptyset$ .  The {\em cut-size} $\partial(S)$ of a cut $S$ is the number of edges between $S$ and $\bar{S}$. The {\em conductance} of a cut $S$ is
$\Phi(S)\myeq \frac{\partial(S)}{\min(\vol(S),\vol(V\setminus S))}$. Unless otherwise noted, when speaking of the
conductance of a cut $S$, we assume $S$ to be the side of minimum
volume.

For a directed graph $G=(V,E)$, We denote the in-degree of vertex $v$ by $\ind(v)$, and the out-degree of $v$ by $\outd(v)$. We extend the notations to cuts, where $\ind(S)$ denotes the number of edges from $\bar{S}$ to $S$, and $\outd(S)$ is the number of edges from $S$ to $\bar{S}$. Moreover, for disjoint subsets of nodes $S_1,S_2$, we denote $E(S_1,S_2)$ as the set of edges directed from a node in $S_1$ to a node in $S_2$, and when $S_1$ (or $S_2$) has only one vertex, we simply write the vertex instead of the singleton set. When we use the undirected notations on a directed graph, we simply ignore the directions of the edges, and look at the undirected version. For example, $d(v)$ for vertex $v$ in a directed graph is simply $\ind(v)+\outd(v)$, and $\partial(S)=\ind(S)+\outd(S)$ for cut $S$. A directed graph $G$ is {\em $\alpha$-balanced} if $\ind(S)$ and $\outd(S)$ are both at least $\partial(S)/\alpha$ for all cut $S$ of $G$.

An undirected graph is {\em simple} if there is no self-loop and parallel edge. A directed graph is {\em simple} if there is no self-loop, and for any pair of nodes $(u,v)$, there is at most one edge directed from $u$ to $v$. When we treat a simple directed graph as an undirected graph, there will be at most two parallel edges between any pair of nodes.

The {\em internal edges} of a set $C\subseteq V$ are the edges with both endpoints in $C$. We add $H$ or $C$ as subscripts, i.e. $d_H(v),\vol_C(A),\phi_C(A)$ etc., if we consider only the internal edges of a subgraph $H$ or a subset $C\subseteq V$, while we omit the subscripts when the graph is clear from context. We use $m$ to denote the number of (internal) edges of a graph, and again add a subscript to $m$ when there are multiple graphs in the context.

A cut $S$ is {\em non-trivial} if $|S|,|\bar{S}|>1$. Given $A\subset V$ and a cut $S$ we say that $A$ {\em contains} the cut $S$
if there exist nodes $u$ and $v$ in $A$ such that $u \in S$ and $v \in \bar{S}$. Otherwise, we say that $A$ {\em does not contain} the cut.

For an undirected graph $G$, a cut is a {\em non-trivial} minimum cut, if it is a cut of minimum cut-size in $G$ {\em and} it is non-trivial.\footnote{Thus, it is {\em not necessarily} the cut of minimum cut-size out of all non-trivial cuts.} 

For a directed graph $G$, a cut $S$ is a {\em non-trivial} minimum cut, if it is a cut of minimum number of outgoing edges (i.e., $\outd(S)$) among all cuts in $G$ {\em and} it is non-trivial. Minimum cut can be equivalently defined with respect to the number of incoming edges, and if $S$ is a non-trivial minimum cut with respect to the number of outgoing edges, then $\bar{S}$ is a non-trivial minimum cut with respect to the number of incoming edges. Thus, we refer to $(S,\bar{S})$ as a non-trivial minimum cut, if either $S$ or $\bar{S}$ is a non-trivial (outgoing) minimum cut. 

Since we can check the size of all trivial cuts by looking at the degree of nodes, our algorithm pretends the minimum cut is non-trivial. To get the minimum cut, we compare the size of the cut returned by the algorithm against the minimum trivial cut.

We will consider flow problems (in undirected graphs) extensively. Formally, a {\em flow problem} $\Pi$ is defined with a {\em source function}, $\Delta:V\rightarrow \mathbb{Z}_{\geq 0}$, a {\em sink function}, $T:V\rightarrow \mathbb{Z}_{\geq 0}$, and edge capacities $c(\cdot)$. We say that $v$ is {\em a sink of capacity} $x$ if $T(v)=x$. All flow problems we consider in this work use the same sink function, $\forall v: T(v)=d(v)$, so we won't explicitly write down $T(\cdot)$. To avoid confusion with the way flow is used, we use {\em supply} to refer to the substance being routed in flow problems.

For the sake of efficiency, we will not typically obtain a full solution to a flow problem. We will compute a pre-flow, which is a function $f:V\times V \rightarrow R$, where $f(u,v) = -f(v,u)$. A pre-flow $f$ is {\em source-feasible} with respect to source function $\Delta$ if $\forall v:\sum_u f(v,u) \leq \Delta(v)$. A pre-flow $f$ is {\em capacity-feasible} with respect to $c(\cdot)$ if $|f(u,v)| \leq c(e)$ for $e=\{u,v\} \in E$ and $f(u,v) = 0$ otherwise. We say that $f$ is a {\em  feasible pre-flow} for flow problem $\Pi$, or simply a {\em pre-flow} for $\Pi$, if $f$ is both source-feasible and capacity-feasible with respect to $\Pi$. 

For a pre-flow $f$ and a source function $\Delta(\cdot)$, we extend the notation to denote $f(v)\myeq\Delta(v)+\sum_u f(u,v)$ as {\em the amount of supply ending at $v$ after $f$}. Note that $f(v)$ is non-negative for all $v$ if $f$ is source-feasible. When we use a pre-flow as a function on vertices, we refer to the function $f(\cdot)$, and it will be clear from the context what $\Delta(\cdot)$ we are using. If in addition, $\forall v:f(v)\leq T(v)$, the pre-flow $f$ will be a {\em feasible flow (solution)} to the flow problem $\Pi$. 

We denote $\ex(v)\myeq\max(f(v)-T(v),0)$ as the excess supply at $v$, and we call the part of the supply below sink capacity, i.e., $\min(f(v),T(v))$, as the supply routed to the sink at $v$, or {\em absorbed} by $v$. We call the sum of all the supply absorbed by vertices, $\sum_v \min(f(v),T(v))$, the total supply routed to sinks. Finally, given a source function $\Delta(\cdot)$, we define $|\Delta(\cdot)|\myeq\sum_v\Delta(v)$ as the total amount of supply in the flow problem. Note the total amount of supply is preserved by any pre-flow routing, so $\sum_v f(v) =|\Delta(\cdot)|$ for any source-feasible pre-flow $f$.

\newcommand{\first}{\operatorname{first}}
\newcommand{\last}{\operatorname{last}}
\newcommand{\current}{\operatorname{current}}

\section{Flow Algorithm}
\label{sxn:flow}
The main tool used in~\cite{KT18} is a local diffusion method that finds low conductance cuts, we use a flow based local method instead, which we describe in this section. Its basic building block is a unit flow method, which is used as a subroutine by an excess scaling flow algorithm.  It produces either a pre-flow
routing most of the source supply to sinks or  a small conductance cut.

The unit flow method works on flow problems where $\forall v:\Delta(v)\leq wd(v)$ for constant $w\geq 2$. These flow problems are incremental in the sense that the initial excess supply on any $v$ is not too large compared to its sink capacity $d(v)$, so intuitively it requires limited work to spread the excess supply to sinks. Additionally, since the primary concern is to find low conductance cuts, instead of routing as much supply to sinks as possible, we use a 
Push-Relabel algorithm~\cite{GT14}, where we limit each label of a node to be at most a given parameter $h$, and we show
that at termination either ``enough'' flow was routed or  a low conductance cut with ``large enough'' volume can be found using a sweep cut method. These two aspects make the unit flow method very efficient.

We use excess scaling to divide a flow problem with a more general source supply function into multiple  incremental phases that it solves 
using the unit flow method. The basic idea is as follows: We use a parameter $\mu$, called {\em unit}, to scale down the source supply function such that a supply of $x$ turns into $x/\mu$ units, each  unit corresponding to a supply of
$\mu$. We choose the initial value $\mu$  large enough, so that after scaling down every $\Delta(v)$ by $\mu$ the source supply in unit $\mu$ satisfies $\forall v:\Delta(v)\leq 2d(v)$. Given the source supply function in unit $\mu$, the unit flow method either returns a low conductance cut $(A,\bar A)$, where $\min(\vol(A), \vol(\bar A))$ is ``large'', or it returns a flow that spreads out the supply so that a constant fraction of the total source supply is routed to vertices and each vertex $v$ receives  at most $d(v)$ units of supply. In the earlier case we terminate, in the later case we discard all source supply that we did not succeed in routing (and show that this only discards a constant fraction of the initial source supply in total) and then  scale down $\mu$ by $2$.
Thus, in the new unit value, each vertex $v$ has  at most $2d(v)$ units of supply, which we use as source supply for the next unit flow invocation. Note that when we work in unit $\mu$, the sink capacity of $v$ is $d(v)$ units, i.e. $d(v)\mu$  supply in unit 1. Thus when $\mu$ is large, vertices have and transfer large amount of supply,
which  limits the volume of the subgraph that the unit flow procedure needs to explore to either send flow to or to find a low conductance cut in. As we decrease the value of $\mu$ geometrically, successive invocations of
the  unit flow method 
explore larger and larger subgraphs. This allows us to terminate early when there is a low conductance cut of small volume, and is the key to achieve local runtime.

\subsection{Unit Flow}
\label{sxn:unit-flow}
The {\em Unit-Flow} subroutine (Algorithm~\ref{alg:unit-flow})
 takes as input an undirected graph $G=(V,E)$ (with parallel edges but no self-loops), source function
$\Delta$ and integer $w\geq 2$ such that $\forall v:0\leq \Delta(v)\leq wd(v)$, as well as an integer capacity $U>0$ on all edges. Each vertex $v$ is a sink of capacity $d(v)$. Furthermore, the procedure takes as input an integer $h\ge\ln(|E|)$ to customize the push-relabel algorithm, which we describe next.

In our push-relabel algorithm, each vertex $v$ has a non-negative integer label $l(v)$ which is initially zero. The label of a vertex only increases during the execution of the algorithm and (in a modification of the standard push-relabel technique) 
cannot become larger than $h$. The bound of $h$ on the labels makes the runtime of {\em Unit-Flow} linear in $h$, but it may prevent our algorithm from routing all units of supply to sinks even when there exists a feasible routing for the flow problem. However, when our algorithm cannot route a feasible flow, allowing labels of value up to $h$ is sufficient to find a cut with low conductance (i.e.,~of value inversely proportional to $h$), which is our primary concern. 

The algorithm maintains a pre-flow and the standard residual network, where each undirected edge $\{v,u\}$ in $G$ corresponds to two directed arcs $(v,u)$ and $(u,v)$, with flow values such that $f(v,u)=-f(u,v)$, and $|f(v,u)|,|f(u,v)|\leq U$. The residual capacity of an arc $(v,u)$ is $r_f(v,u)=U-f(v,u)$. We also maintain $f(v)=\Delta(v)+\sum_u f(u,v)$, which will be non-negative for all nodes $v$ during the execution. The algorithm will explicitly enforce $f(v)\leq wd(v)$ for all $v$ through the execution (i.e., it does not push flow to a vertex $v$ if that
would result in $f(v) > w d(v)$).
\begin{algorithm}
\caption{Unit Flow}
\label{alg:unit-flow}
\fbox{
\parbox{0.97\textwidth}{
{\em Unit-Flow}($G$,$\Delta$,$U$,$h$,$w$) \\ 
\tab {\bf Initialization:}\\
\tab \tab $\forall \{v,u\}\in E$, $f(u,v)=f(v,u)=0$, $Q=\{v|\Delta(v)>d(v)\}$.\\
\tab \tab $\forall v$, $l(v)=0$, and $\current(v)$ is the first edge in its list of incident edges.\\
\tab {\bf While} $Q$ is not empty\\
\tab \tab Let $v$ be the first vertex in $Q$, i.e. the lowest labelled active vertex.\\
\tab \tab {\em Push/Relabel}$(v)$. \\
\tab \tab {\bf If} {\em Push/Relabel}$(v)$ pushes supply along $(v,u)$\\
\tab \tab \tab {\bf If} $u$ becomes active, {\em Add}$(u,Q)$\\
\tab \tab \tab {\bf If} $v$ becomes in-active, {\em Remove}$(v,Q)$\\
\tab \tab {\bf Else If} {\em Push/Relabel}$(v)$ increases $l(v)$ by $1$\\
\tab \tab \tab {\bf If} $l(v)<h$, {\em Shift}$(v,Q)$\footnote{Recall $Q$ maintains the set of active vertices by their labels in non-decreasing order. In this case, we raised the label of $v$ by $1$, and {\em Shift}$(v,Q)$ shifts the position of $v$ in $Q$. Note {\em Add, Remove} operation always take place at the head of $Q$, and {\em Shift} always deals with the case when the label is raised by $1$, so we can implement $Q$ using linked lists, such that {\em Shift, Add, Remove} all take $O(1)$ time per operation.}, {\bf Else} {\em Remove}$(v,Q)$\\
\tab \tab {\bf End If}\\
\tab {\bf End While}}}
\fbox{
\parbox{0.97\textwidth}{
{\em Push/Relabel}$(v)$ \\
\tab Let $\{v,u\}$ be $\current(v)$.\\
\tab {\bf If} {\em Push}$(v,u)$ is applicable, then {\em Push}$(v,u)$.\\
\tab {\bf Else}\\
\tab \tab {\bf If} $\{v,u\}$ is not the last edge in $v$'s list of edges.\\
\tab \tab \tab Set $\current(v)$ to be the next edge in $v$'s list of edges.\\
\tab \tab {\bf Else} (i.e. $(v,u)$ is the last edge of $v$)\\
\tab \tab \tab {\em Relabel}$(v)$, and set $\current(v)$ to be the first edge of $v$'s list of edges.\\
\tab \tab {\bf End If}\\
\tab {\bf End If}
}}
\fbox{
\parbox{0.97\textwidth}{
{\em Push}$(v,u)$\\
\tab {\bf Applicability}: $\ex(v)>0,r_f(v,u)>0$, $l(v)=l(u)+1$.\\
\tab {\bf Assertion}: $f(u)<wd(u)$.\\
\tab $\psi = \min\left(\ex(v),r_f(v,u),wd(u)-f(u)\right)$\\
\tab Send $\psi$ units of supply from $v$ to $u$:\\
\phantom{\tab} $f(v,u)\leftarrow f(v,u)+\psi, f(u,v)\leftarrow f(u,v)-\psi$.
}}
\fbox{
\parbox{0.97\textwidth}{
{\em Relabel}$(v)$\\
\tab {\bf Assertion}: $v$ is active, and $\forall u\in V$, $r_f(v,u)>0\implies l(v)\leq l(u)$.\\
\tab $l(v)\leftarrow l(v)+1$.
}}
\end{algorithm}
As in the generic push-relabel framework, an {\em eligible} arc $(v,u)$ is a
pair such that $r_f(v,u)>0$ and $l(v) = l(u)+1$. A vertex $v$ is {\em active} if $l(v)<h$ and $\ex(v)>0$. The
algorithm  maintains the property that for any arc $(v,u)$ with
positive residual capacity,  $l(v) \leq l(u) + 1$.
The algorithm repeatedly picks an active vertex $v$ with minimum label and either pushes along an eligible arc incident to
$v$ if there is one, or it raises the label of $v$ by $1$ if there is no eligible arc out of $v$.

Upon termination of the algorithm, we will have a pre-flow $f$, as well as labels $l(\cdot)$ on vertices. {\em Unit-Flow} will either successfully route a large amount of supply to sinks, or we can compute a low conductance cut using the labels. The proof is in Section~\ref{app:unitflow}. 
\begin{restatable}{theorem}{unitFlow}
\label{thm:unit-flow}
Given $G,\Delta,h,U,w\geq 2$ such that $\Delta(v)\leq wd(v)$ for all $v$, {\em Unit-Flow} terminates with a pre-flow $f$, where we must have one of the following three cases
\begin{enumerate}[label={(\arabic*)}]
\item $f$ is a feasible flow, i.e. $\forall v:\ex(v)=0$. All units of supply are absorbed by sinks.
\item $f$ is not a feasible flow, but $\forall v:f(v)\geq d(v)$, i.e., at least $2m$ units of supply are absorbed by sinks.
\item If $f$ satisfies neither of the two cases above, we can find a cut $(A,\bar{A})$ such that $wd(v)\geq f(v)\geq d(v)$ for all $v\in A$, and $f(v)\leq d(v)$ for all $v\in \bar{A}$. Furthermore
\begin{enumerate}
\item If $h \ge \ln m$, the conductance $\Phi(A)=\frac{|E(A,V\setminus A)|}{\min(\vol(A),2m-\vol(A))}\leq \frac{20\ln 2m}{h}+\frac{w}{U}$.
\item If $h=\Omega(\ln m' \ln\ln m')$ for $m'\geq m$, we have a more fine-grained conductance guarantee: let $K$ be the smaller side of $(A,\bar{A})$, then $\Phi(K)\leq \frac{\ln m + 1 - \ceil*{\ln \vol(K)}}{50\ln m'}+\frac{w}{U}$.
\end{enumerate}
\end{enumerate}
\end{restatable}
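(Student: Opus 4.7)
My plan is to prove the theorem in three stages: establish the modified push--relabel invariants, perform a case split at termination, and, in the residual case, read a sweep cut off the level sets. First, by induction over the operations I would verify that throughout the algorithm: (i) $f$ is a valid pre-flow (source-feasibility is built into $\psi\le\ex(v)$, capacity-feasibility into $\psi\le r_f(v,u)$); (ii) the residual label invariant $r_f(v,u)>0\Rightarrow l(v)\le l(u)+1$ holds, since Relabel fires only when no eligible arc remains and pushes preserve the invariant in the standard way; (iii) the hard cap $f(v)\le wd(v)$ is maintained, because $\psi$ is capped by $wd(u)-f(u)$; and (iv) once $l(v)\ge 1$, $f(v)\ge d(v)$, since each Relabel requires $v$ active ($\ex(v)>0$) and every subsequent Push out of $v$ sends at most $\ex(v)=f(v)-d(v)$ units. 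Termination is standard: at most $nh$ relabels, and the usual potential argument bounds the number of pushes.

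Next I would do the case split. At termination $Q$ is empty, so every $v$ has either $\ex(v)=0$ or $l(v)=h$. If all vertices have $\ex(v)=0$, case (1) holds. Otherwise some vertex has excess at label $h$. If in addition $f(v)\ge d(v)$ for every $v$, then $\sum_v\min(f(v),d(v))=2m$, which is case (2). In the remaining subcase, some $v_0$ satisfies $f(v_0)<d(v_0)$, so $l(v_0)=0$ by invariant (iv); therefore the nested level sets $A_i=\{v:l(v)\ge i\}$ satisfy $A_h\ne\emptyset$, $\bar A_i\ne\emptyset$ for every $i\ge 1$, and $d(v)\le f(v)\le wd(v)$ on each $A_i$.

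To extract the cut I would sweep over $i\in\{1,\dots,h\}$. Flow conservation on $A_i$ combined with the sandwich $d(v)\le f(v)\le wd(v)$ yields net outflow $F_{out}(A_i)\le (w-1)\vol(A_i)$. The label invariant forces every cut edge of label gap $\ge 2$ to be saturated from $A_i$ to $\bar A_i$, so writing $E_i^g$ for the cut edges whose endpoints differ in label by exactly $1$,
\begin{equation*}
|E(A_i,\bar A_i)| \;\le\; \tfrac{w\,\vol(A_i)}{U} + 2|E_i^g|.
\end{equation*}
Each edge lies in $E_i^g$ for at most one $i$, so $\sum_i |E_i^g|\le m$, and additionally $|E_i^g|\le \vol(A_{i-1})-\vol(A_i)$. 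The bound (a) then follows by a Cheeger-style growth argument: if every $A_i$ had conductance $>\phi:=\tfrac{20\ln 2m}{h}+\tfrac{w}{U}$, the displayed inequality together with the $|E_i^g|$ budget would force $\vol(A_{i-1})\ge (1+\Theta(\phi))\vol(A_i)$ (with an analogous statement for $2m-\vol(A_i)$ on the other side of $m$), so $\min(\vol(A_i),2m-\vol(A_i))$ would cross $m$ within $O(\log m/\phi)\ll h$ steps, a contradiction.

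The hard part is exactly this geometric sweep, and more so its refinement for part (b). The target $\Phi(K)\le\tfrac{\ln m+1-\lceil\ln\vol(K)\rceil}{50\ln m'}+\tfrac{w}{U}$ demands that the conductance scale with the smaller side's size, which I would achieve by partitioning the level range $[1,h]$ into $O(\log\log m')$ dyadic volume bands, averaging $|E_i^g|$ inside each band, and using $h=\Omega(\ln m'\ln\ln m')$ to absorb the extra $\log\log m'$ factor. The subtlest point will be tracking on which side of the cut the smaller-volume piece $K$ lives as $\vol(A_i)$ crosses $m$, and verifying that the $E_i^g$ budget redistributed across bands is tight enough to yield the precise $\lceil\ln\vol(K)\rceil$ dependence; also the constant $50$ will come from balancing the doubling rate against the band count.
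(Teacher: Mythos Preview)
Your invariants, the termination case split, and the sweep argument for~(3a) are correct and coincide with the paper's proof: level sets $S_i=\{v:l(v)\ge i\}$, cut edges split into label-gap-$1$ edges $z_1(i)$ and saturated edges $z_2(i)$, the bound $z_2(i)\le \frac{w\min(\vol(S_i),2m-\vol(S_i))}{U}+z_1(i)$, and geometric growth over $h/2$ consecutive levels on the side where $\vol(S_{\lfloor h/2\rfloor})\le m$. One small point: your displayed inequality carries $\frac{w\vol(A_i)}{U}$, not $\frac{w\min(\cdot)}{U}$; you also need the symmetric bound (via $f(v)\le d(v)$ on $\bar A_i$), which you gesture at but should state.

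For~(3b) your plan has the wrong bookkeeping, and the fix is the missing idea. You propose ``$O(\log\log m')$ dyadic volume bands'' with uniform averaging of $|E_i^g|$; that would only reproduce the coarse bound of~(3a). The paper instead uses $\log m$ bands, one per volume scale $[2^{j-1},2^j]$, with a \emph{non-uniform} pigeonhole: allot band $j$ a budget of $\Theta\bigl(\tfrac{\log m'}{\log m+1-j}\bigr)$ levels. The total budget is $\Theta(\log m')\sum_{j\le \log m}\tfrac{1}{\log m+1-j}=\Theta(\log m'\log\log m)\le h/2$, so some band $j^*$ receives more than its budget of level-indices $i$. Inside that band all volumes lie within a factor of~$2$, so the \emph{same} region-growing argument as in~(3a) (no averaging needed) yields an $i^*$ with $z_1(i^*)\le \tfrac{(\log m+1-j^*)}{\Theta(\log m')}\vol(S_{i^*})$; since $j^*=\lceil\log\vol(K)\rceil$, this is exactly the numerator in the target bound. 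The $\log\log m'$ enters only as the harmonic sum over the $\log m$ bands, not as the number of bands.
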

The motivation for maintaining $f(v)\leq wd(v)$ throughout the algorithm is to establish lower bounds on $\vol(A)$. Intuitively, if the total amount of excess supply is large at the end, $\vol(A)$ must be large as no single vertex can have too much excess. 
More specifically, we have the following observations.
\begin{observation}
\label{obs:excess}
If the output fulfills case $(3)$ of Theorem~\ref{thm:unit-flow}, we  have 
\[
\sum_{v\in V}\ex(v)\leq (w-1)\vol(A)
\]
\end{observation}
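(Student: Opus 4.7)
The plan is to decompose the sum $\sum_v \ex(v)$ according to the cut $(A, \bar{A})$ given by case (3) of Theorem~\ref{thm:unit-flow}, and bound each side using the pointwise bounds on $f(v)$ guaranteed by that case. Recall from the preliminaries that $T(v) = d(v)$ for every $v$, so $\ex(v) = \max(f(v) - d(v), 0)$.

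First, I would handle $\bar{A}$. Case (3) asserts that $f(v) \leq d(v)$ for all $v \in \bar{A}$. Thus $\ex(v) = 0$ for every $v \in \bar{A}$, and these vertices contribute nothing to the sum. This reduces the problem to bounding $\sum_{v \in A} \ex(v)$.

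Next, for each $v \in A$, case (3) guarantees $d(v) \leq f(v) \leq wd(v)$. Hence
\[
\ex(v) = f(v) - d(v) \leq wd(v) - d(v) = (w-1)d(v).
\]
Summing over $A$ yields
\[
\sum_{v \in V} \ex(v) \;=\; \sum_{v \in A} \ex(v) \;\leq\; (w-1)\sum_{v \in A} d(v) \;=\; (w-1)\vol(A),
\]
which is the claim.

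There is no real obstacle here: the observation follows immediately by plugging the per-vertex bounds from case (3) into the definition of excess. The only thing worth double-checking is that the invariant $f(v) \leq wd(v)$ indeed holds at termination for vertices in $A$ (not just throughout the push phase), but this is exactly what the statement of case (3) records, since \emph{Push} explicitly enforces $f(u) \leq wd(u)$ on every push into $u$.
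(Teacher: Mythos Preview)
Your proof is correct and is exactly the intended argument: the paper states this as an observation without a separate proof, and the bound follows immediately from the per-vertex guarantees of case~(3) of Theorem~\ref{thm:unit-flow} via the decomposition you give.
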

\begin{observation}
\label{obs:largevol}
When $|\Delta(\cdot)|\geq tm$ for constant $t>2$, for any pre-flow $f$ we must have 
\[
\sum_{v\in V}\ex(v)\geq (t-2)m.
\] 
If $w$ is a constant, and we get case $(3)$, then every node $v$ in $A$ absorbs
$d(v)$ units of supply, and $\vol(A)=\Theta(m)$.
\end{observation}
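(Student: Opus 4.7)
The plan is to derive both statements directly from conservation of supply together with the characterization of case (3) and Observation~\ref{obs:excess}. Since every source-feasible pre-flow preserves total supply, we have $\sum_v f(v) = |\Delta(\cdot)| \geq tm$ upon termination of \emph{Unit-Flow}. For each vertex, split $f(v) = \min(f(v), d(v)) + \ex(v)$, where the first term is the supply absorbed at the sink of $v$ (using $T(v) = d(v)$) and the second is the leftover excess. Summing over $v$ gives
\[
\sum_{v} \ex(v) \;=\; |\Delta(\cdot)| \;-\; \sum_{v} \min(f(v), d(v)) \;\geq\; tm \;-\; \sum_v d(v) \;=\; (t-2)m,
\]
since $\sum_v d(v) = 2m$. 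This establishes the first claim.

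For the second claim, assume $w = O(1)$ and that \emph{Unit-Flow} terminates in case (3). By the characterization of case (3) in Theorem~\ref{thm:unit-flow}, every $v \in A$ satisfies $f(v) \geq d(v)$, hence $\min(f(v), d(v)) = d(v)$; that is, each $v \in A$ absorbs exactly $d(v)$ units (its full sink capacity). Combining Observation~\ref{obs:excess}, which gives $\sum_v \ex(v) \leq (w-1)\vol(A)$, with the lower bound just established yields
\[
(t-2)m \;\leq\; \sum_v \ex(v) \;\leq\; (w-1)\vol(A),
\]
so $\vol(A) \geq \tfrac{t-2}{w-1}\, m = \Omega(m)$. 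The matching upper bound $\vol(A) \leq \vol(V) = 2m$ is immediate, giving $\vol(A) = \Theta(m)$.

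There is essentially no technical obstacle here: the proof is a short accounting argument that relies only on supply conservation, the definition of $\ex$, the $f(v) \geq d(v)$ half of case (3), and the previously established bound in Observation~\ref{obs:excess}. The only thing worth being careful about is noting that while $f(v) \geq d(v)$ for $v\in A$ is explicit in case (3), it is actually stronger than needed for the absorption statement -- we just need that each $v\in A$ has reached its sink capacity, which is forced because sinks cap absorption at $d(v)$ and $f(v)\geq d(v)$.
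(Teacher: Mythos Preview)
Your proof is correct and is exactly the straightforward accounting argument the paper intends; the observation is stated without proof precisely because it follows immediately from supply conservation, the sink-capacity bound $\sum_v \min(f(v),d(v))\le 2m$, and Observation~\ref{obs:excess}, all of which you invoke cleanly.
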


{\em Unit-Flow} returns a pre-flow $f$ and a possibly empty cut $A$.
Additionally, we treat units of supply as distinct tokens with marks bearing information, which must be preserved by the pre-flow, leading to an extra $O(1)$ work per push of a single
unit.
This leads to the following running time result, whose proof is in Section~\ref{app:unitflow}.
\begin{restatable}{mylemma}{unitFlowTime}
\label{lemma:unitflow-runtime}
The running time for {\em Unit-Flow} is $O(w|\Delta(\cdot)|h)$. The running time includes 
identifying which case of Theorem~\ref{thm:unit-flow} we end in, and in case $(3)$, identifying 
the set A.
\end{restatable}

\subsection{Excess Scaling Flow Algorithm}
\label{sxn:flow-algorithm}
The excess scaling procedure (Algorithm~\ref{alg:scaling-flow}) takes as input an undirected graph $G$ (with parallel edges) of volume $2m$, source function $\Delta$ such that $|\Delta(\cdot)|=2m$, constant $\tau\in (0,1)$, capacity parameter $U$,  and an integer $h\ge \ln m$. Recall that each vertex $v$ is a sink of capacity $d(v)$. The algorithm will either in time $O(mh)$  route at least $(1-\tau)2m$ supply to sinks, or find a low conductance cut $(K,\bar{K})$ in time proportional to $\min(\vol(K),\vol(\bar K))$. Formally, we have the following result, whose proof is in Section~\ref{app:excessscaling}.
\begin{restatable}{mylemma}{excessScaling}
\label{lemma:flow-alg}
Given a graph $G$ of volume $2m$, a source function $\Delta$ such that $|\Delta(\cdot)|=2m$, a  constant $0<\tau<1$, and positive parameters  $U$ and $h$, the flow procedure will return a preflow $f$, subject to edge capacity of $2UF$ on every edge, where $F=\max_v \frac{\Delta(v)}{2d(v)}$. It will also return $\Delta'(\cdot)$, the amount of source supply from each vertex that is routed to sinks, where each $v$ is a sink of capacity $d(v)$. In addition, we have either of the two cases below:
\begin{enumerate}[label={(\arabic*)}]
\item At least a $(1-\tau)$ fraction of the total source supply is routed to sinks
\[
|\Delta'(\cdot)|\geq (1-\tau) 2m
\]
The running time is $O(mh)$ in this case.
\item It returns a cut $(K,\bar{K})$, $\vol(K)\leq \vol(\bar{K})$, and $\vol(K)$ is $\Omega(\frac{m}{F\ln m\ln\ln m})$. The running time is\\ $O(\vol(K)h\ln\frac{m}{\vol(K)}\ln\ln m)$. Furthermore
\begin{enumerate}
\item If $h\ge \ln m$, $\Phi(K)\leq \frac{20\ln 2m}{h}+\frac{2}{U}$.
\item If $h=\Omega(\ln m'\ln\ln m')$ with $m'\geq m$, $\Phi(K)\leq \frac{(\log m+1-\ceil*{\log \vol(K)})}{20\log m'}+\frac{2}{U}$  
\end{enumerate}
\end{enumerate}
\end{restatable}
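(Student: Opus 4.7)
The plan is to analyze the excess-scaling loop iteration by iteration and then aggregate the per-iteration guarantees.

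First I would verify, by induction on $j$, that every invocation of \emph{Unit-Flow} satisfies its precondition $\Delta_j(v)/\mu_j\le 2d(v)$: the base case follows from $\mu_0=F=\max_v\Delta(v)/(2d(v))$, and the inductive step uses $\Delta_{j+1}(v)=\min(f_j(v),d(v))\mu_j\le d(v)\mu_j=2d(v)\mu_{j+1}$, which matches the Note in the pseudocode. By Theorem~\ref{thm:unit-flow} with $w=2$, each call lands in one of its three cases; in case~(3), the returned $A_j$ already satisfies the conductance bounds of (2a)/(2b) of the lemma, and by Observation~\ref{obs:excess} the supply discarded in that iteration is at most $\mu_j\vol(A_j)$, while case~(1) discards nothing.

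The core step is to bound the total discarded supply so as to conclude $|\Delta'(\cdot)|\ge(1-\tau)\cdot 2m$ whenever termination condition~(1) of Algorithm~\ref{alg:scaling-flow} never fires. Under that assumption, every case-(3) iteration contributes at most $\mu_j\vol(A_j)<\tau\cdot 2m/(10\ln(2\mu_j)\ln\ln m)$. Summing over $j=0,1,\ldots$ with $\mu_j=F/2^j$ converts $\sum_j 1/\ln(2\mu_j)$ into a shifted harmonic sum of length $O(\log F)=O(\log m)$ which evaluates to $O(\log\log m)$, so the aggregate case-(3) discard is $O(\tau m)\le\tau\cdot 2m$ for the right constant. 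I expect the main obstacle to be handling case~(2) of \emph{Unit-Flow}: the invariant $f(v)\le 2d(v)$ combined with the case-(2) condition $\sum_v f(v)\ge 2m$ forces $|\Delta_j|/\mu_j>2m$ whenever positive excess appears, so case-(2) excess requires $\mu_j<1$ and hence triggers termination condition~(2) of the outer procedure immediately. This confines case-(2) excess to a single terminating step with $\mu_{j^\ast}\in(1/2,1]$ and discard at most $2m(1-\mu_{j^\ast})$, which must be folded into the $\tau\cdot 2m$ budget by a careful constant choice (and by exploiting $F\ge 1/2$, which follows from weighted averaging of $\Delta(v)/d(v)$ against $d(v)$).

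For the runtime, Lemma~\ref{lemma:unitflow-runtime} gives $O(mh/\mu_j)$ per call since $|\Delta_j|\le 2m$, and the geometric series $\sum_{j\le j^\ast}1/\mu_j$ is dominated by its last term $1/\mu_{j^\ast}$, yielding total time $O(mh/\mu_{j^\ast})$. In termination~(2), $\mu_{j^\ast}=\Theta(1)$ gives $O(mh)$. In termination~(1) the triggering inequality $\mu_{j^\ast}\ln(2\mu_{j^\ast})=\Omega(m/(\vol(K)\ln\ln m))$ inverts, via the standard inverse $y/\ln y$ of $x\mapsto x\ln x$, to $\mu_{j^\ast}=\Omega(m/(\vol(K)\ln(m/\vol(K))\ln\ln m))$, producing the stated runtime $O(\vol(K)h\ln(m/\vol(K))\ln\ln m)$; the volume lower bound $\vol(K)=\Omega(m/(F\ln m\ln\ln m))$ then follows from $\mu_{j^\ast}\le F$ together with monotonicity of $\mu\ln(2\mu)$ on $\mu\ge 1/2$. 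Finally, the $2UF$ edge-capacity bound comes from $\sum_j U\mu_j\le 2UF$, and the conductance bounds (2a)/(2b) are inherited directly from Theorem~\ref{thm:unit-flow}(3a)/(3b) applied to $K=A_{j^\ast}$.
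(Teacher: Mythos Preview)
Your overall approach matches the paper's proof: the per-iteration discard bound via Observation~\ref{obs:excess}, the harmonic sum over the geometric schedule of $\mu$, the geometric runtime accumulation, and the inversion of the termination threshold to extract the local runtime bound are all exactly what the paper does. The paper organizes the last step as a case split ($\mu_{j^\ast}\ge 2$ versus $\mu_{j^\ast}=1$) rather than an explicit $x\mapsto x/\ln x$ inversion, but the content is the same.

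You go beyond the paper in explicitly flagging case~(2) of \emph{Unit-Flow}; the paper simply sets $A_j=\emptyset$ in that case and asserts without justification that ``$\mu$ is at least $1$'' at termination in the no-cut branch. However, your proposed fix does not work. When case~(2) occurs at the terminating iteration one has $|\Delta'|=2m\mu_{j^\ast}$ exactly, and since $\mu_{j^\ast}$ can lie anywhere in $(1/2,1)$, the discard $2m(1-\mu_{j^\ast})$ can approach $m$ \emph{independently} of the constant $10$ in the termination threshold; no ``careful constant choice'' brings this below $\tau\cdot 2m$ for $\tau<1/2$. The same problem afflicts case~(3) at that final iteration with $\mu_{j^\ast}\in(1/2,1)$: there $\ln(2\mu_{j^\ast})\to 0$, the termination threshold becomes vacuous, and the last term of your harmonic sum blows up. The clean repair---which is arguably what the paper tacitly assumes when it writes ``must happen once $\mu$ drops to $1$''---is to initialize $\mu_0=2^{\lceil\log_2 F\rceil}$ rather than $\mu_0=F$. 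Then every $\mu_j$ is a power of two, the final value is exactly $1$, case~(2) is impossible (at $\mu=1$ total unit supply $|\Delta_j|\le 2m$ equals total sink capacity, so no strict excess can coexist with $f(v)\ge d(v)$ for all $v$), and the harmonic sum terminates cleanly at $\ln 2$. This costs only a factor of two in the edge-capacity bound.
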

The procedure divides the flow problem into incremental phases, and uses successive {\em Unit-Flow} invocations on them. This is done via a parameter $\mu$, which is the value of one unit in {\em Unit-Flow}.  Initially,  $\mu=\max_v \frac{\Delta(v)}{2d(v)}$ such that each $v$ has initial source supply at most $2d(v)$ units.
It then calls {\em Unit-Flow} with scaled source function $\Delta/\mu$ and $w=2$. Every unit of supply in {\em Unit-Flow} is supply of value $\mu$ in the original problem. To avoid confusion, when we say {\em $x$ supply}, we mean a supply of value $x$, and when we say {\em $x$ units of supply}, we mean a supply of value $x\mu$.
Algorithm~\ref{alg:scaling-flow} calls {\em Unit-Flow} repeatedly with a geometrically decreasing value of $\mu$. The sink capacity of $v$ is $d(v)$ units in {\em Unit-Flow}, but the pre-flow returned by {\em Unit-Flow} may have excess supply on vertices. To use the supply on vertices at the end of a {\em Unit-Flow} invocation as the source supply of the next {\em Unit-Flow} call, we simply discard all excess supply (as we show this will only discard a small fraction of the total supply). Then 
there is at most $d(v)$ supply in unit $\mu$ at each vertex $v$. 
Thus we can halve the value of $\mu$ so that each $v$ has at most $2d(v)$ supply in the new unit. If, however, every node $v$ has
at most $d(v)$ supply in unit 1, we terminate as each vertex can absorb its supply.

From a flow point of view in the $j$-th call to {\em Unit-Flow} for $j = 0, 1, ...$ each node $v$ has a source supply $\Delta_j(v)$, 
where $\Delta_0(v) = \Delta(v)$ and for $j > 0$, $\Delta_j(v) = \mu \cdot \min(d(v),f_{j-1}(v))$ (the min captures the removal of excess supply), where $f_{j-1}(v)$ is the amount of supply ending at $v$ after the $j-1$-st call to {\em Unit-Flow}. Assume for the
moment that $f_{j-1}(v) \le d(v)$. Then for $j > 0$, $\Delta_j(v) = \mu \cdot f_{j-1}(v)$, 
i.e., each node $v$ has as source supply in the $j$-th call to {\em Unit-Flow} exactly the supply values that it received in the previous call. Thus, no
supply is absorbed at nodes between consecutive calls of {\em Unit-Flow}, the supply is just ``spread out'' more and more. Once the supply
ending at each node is at most its degree, the procedure terminates. Due to the removal of excess supply this happens for sure when
$\mu = 1$, but it might already happen for a larger value of $\mu$. As the 
 final flow $f$ is the sum of all flows $f_j$ and each call to {\em Unit-Flow} uses at most $U \mu$ edge capacity with
$\mu$ geometrically decreasing, each edge carries at most 
$2 U \max_v \frac{\Delta(v)}{2d(v)}$ flow. As the total source supply given to the $j$-th call is $|\Delta_j(\cdot)/\mu| \le 2m/\mu$, its runtime is
$O(mh/\mu)$ and as $\mu$ decreases geometrically the total time for all calls to {\em Unit-Flow} is  $O(mh/\mu_f)$, where $\mu_f$ is the value of $\mu$ at termination.

Algorithm~\ref{alg:scaling-flow} returns a pre-flow $f$, a possibly empty cut $A$, and a function $\Delta'(\cdot)$  on vertices 
 such that $\Delta'(v)$
is the amount of the $\Delta(v)$ source supply starting at $v$ that is routed to sinks at the end, i.e. never removed as excess supply. Since we can mark the supply with the original source vertex, and the invocations of {\em Unit-Flow} maintain the marks, $\Delta'(\cdot)$ will be easy to compute.
\noindent
\begin{algorithm}
\caption{Excess scaling flow procedure}
\label{alg:scaling-flow}
\fbox{
\parbox{0.97\textwidth}{
\tab {\bf Input}: $G=(V,E)$, $\Delta(\cdot)$, $\tau$, $U$, $h$.\\
\tab {\bf Initialization}: Let $F=\max_v \frac{\Delta(v)}{2d(v)}$, $\mu=F$, $j=0$, $\Delta_0 = \Delta'=\Delta$, 
\phantom{\tab  {\bf Initialization}: } $f$ be zero pre-flow\\
\tab {\bf Repeat}\\
\tab \tab {\em Note:} $\Delta_j(v)\leq 2d(v)\mu \quad \forall v$\\
\tab \tab Run {\em Unit-Flow}$(G,\frac{\Delta_j(v)}{\mu},U, h, w=2)$, and get back $f_j$ in unit $\mu$, and $A_j$.\\
\tab \tab {\em Add $f_j$ to our current preflow:}\\
\phantom{\tab\tab}  $ f(v,u)\leftarrow f(v,u)+f_j(v,u)\mu, \forall(v,u)$.\\
\tab \tab {\em Remove excess supply on vertices:} \\
\phantom{\tab\tab} $\Delta_{j+1}(v)=(f_j(v)-\ex_j(v))\mu, \forall v$. Update $\Delta'$.\\
\phantom{\tab\tab} (Recall $\Delta'(v)$ is the amount of source supply starting at $v$ that hasn't been removed as excess.)
\tab \tab {\bf If} $\vol(A_j)\geq \frac{\tau 2m}{10\mu\ln 2\mu \ln \ln m}$:\\ 
\phantom{\tab \tab}{\bf Return} $f$, $\Delta'$, and $K\myeq$ smaller side of $A_j,\bar{A_j}$. {\bf Terminate.}\\
\tab \tab {\bf If} $\forall v: \Delta_{j+1}(v) \leq d(v)$:\\
\phantom{\tab \tab} {\bf Return} $f$, $\Delta'$, and $K\myeq\emptyset$. {\bf Terminate.} // Supply at most $d(v) \forall v$\\
\tab \tab $\mu\leftarrow \mu/2$, $j\leftarrow j+1$, proceed to next iteration.\\
\tab {\bf End Repeat} 
}}
\end{algorithm}
Each {\em Unit-Flow} invocation returns a
possibly empty  low conductance cut. If at any point the volume of the returned cut is large compared to the total work done so far, the algorithm can terminate with a low conductance cut $(K, \bar K)$  in time $\tilde O(h \vol(K))$, i.e., in `` local'' time. If this never happens, since the volume of the cut returned after each {\em Unit-Flow} upper-bounds the amount of removed excess supply (Observation~\ref{obs:excess}),  the algorithm must route at least $(1-\tau)2m$ supply to sinks at the end. 

\subsection{Analysis of {\em Unit-Flow}}
\label{app:unitflow}
Recall the {\em Unit-Flow} procudure (Algorithm~\ref{alg:unit-flow}) is a fairly straightforward implementation of the push-relabel framework, with some notable design decisions as follows:
\begin{itemize}
\item We explicitly maintain upperbounds on the supply remaining at a vertex, i.e. $f(v)\leq wd(v)$ when we push to $v$. We assume this holds at the start, i.e. the input $\Delta(v)\leq wd(v)$ for all $v$.
\item We cap the labels at $h$. If a vertex has label $h-1$, and we relabel it to $h$, the vertex never becomes active from then on.
\item The active vertices in $Q$ are in non-decreasing order with respect to their labels, and each time we need to get an active vertex from $Q$, we get the first vertex.
\end{itemize}
Note that the assertion in {\em Push}$(v,u)$ is the reason we always use the active vertex $v$ with the smallest label. If {\em Push}$(v,u)$ can be applied, we must have $f(u)\leq wd(u)$. This is because we know $l(v)=l(u)+1$, so $l(u)<h$, and if $f(u)\geq wd(u)$, then $u$ has positive excess. Thus, $u$ is active in this case, which contradicts $v$ being the active vertex with the smallest label. The applicability conditions and the assertion guarantee that we can push $\psi\geq 1$ unit of supply from $v$ to $u$. 


Upon termination, we have a pre-flow $f$, and labels $l$ on vertices. We make the following observations.
\begin{observation} 
\label{obs:unit-flow}
During the execution of {\em Unit-Flow}, we have
\begin{enumerate}[label={(\arabic*)}]
\item If $v$ is active at any point, the final label of $v$ cannot be $0$. The reason is that $v$ will remain active until either $l(v)$ is increased to $h$, or its excess is pushed out of $v$, which is applicable only when $l(v)$ is larger than $0$ at the time of the push.
\item Each vertex $v$ is a sink that can absorb up to $d(v)$ units of supply, so we call the $f(v)-\ex(v)=\min(f(v),d(v))$ units of supply remaining at $v$ the {\em absorbed} supply. The amount of absorbed supply at $v$ is in $[0,d(v)]$, and is non-decreasing. Thus any time after the point that $v$ first becomes active, the amount of absorbed supply is $d(v)$. In particular any time the algorithm relabels $v$, there have been $d(v)$ units of supply absorbed by $v$.
\end{enumerate}
 Upon termination of {\em Unit-Flow} procedure, we have
\begin{enumerate}[label={(\arabic*)}]
\setcounter{enumi}{2}
\item For any edge $\{v,u\}\in E$, if the labels of the two endpoints differ by more than $1$, say $l(v)-l(u)>1$, then arc $(v,u)$ is saturated. This follows directly from a standard property of the push-relabel framework, where $r_f(v,u)>0$ implies $l(v)\leq l(u)+1$. 
\end{enumerate}
\end{observation}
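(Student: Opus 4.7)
The plan is to prove each of the three observational claims by tracking a small invariant through the actions (push and relabel) that Unit-Flow performs, relying only on the applicability conditions and the assertion stated in the algorithm boxes.

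For claim (1), the idea is to notice that if $v$ is active at some moment with $l(v)=0$, then no Push out of $v$ is applicable: a push along $(v,u)$ requires $l(v)=l(u)+1$, i.e.\ $l(u)=-1$, which is impossible since labels are non-negative throughout. Hence the only way for $v$ to cease being active while its label is still $0$ would be to have $\ex(v)$ reduced to $0$, but only a Push out of $v$ can decrease $\ex(v)$. So while $v$ is active at label $0$, the algorithm will eventually apply Push/Relabel on $v$, walk through all its incident edges without pushing, and call Relabel$(v)$, raising $l(v)$ to $1$. Since labels are monotone non-decreasing, the final label of $v$ is at least $1$.

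For claim (2), the plan is to analyze how pushes affect $\min(f(v),d(v))$. A push \emph{into} $v$ only increases $f(v)$, so it cannot decrease $\min(f(v),d(v))$. A push \emph{out} of $v$ is applicable only when $\ex(v)>0$, i.e.\ $f(v)>d(v)$, and the push amount $\psi\le\ex(v)=f(v)-d(v)$, so $f(v)\ge d(v)$ both before and after the push; hence $\min(f(v),d(v))=d(v)$ stays constant. This establishes monotonicity. The first moment $v$ is active we have $\ex(v)>0$, so $f(v)>d(v)$, giving absorbed supply exactly $d(v)$; monotonicity then forces it to remain at $d(v)$ forever. For the final sentence, Relabel$(v)$ is only called when $v$ is active (part of the Push/Relabel routine and Relabel's own assertion), so the absorbed supply equals $d(v)$ at every relabeling of $v$.

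For claim (3), I will show by induction over the sequence of operations that the invariant ``for every arc $(v,u)$, if $r_f(v,u)>0$ then $l(v)\le l(u)+1$'' is maintained throughout Unit-Flow. Initially all labels are $0$ so the invariant trivially holds. A Push along $(v,u)$ is only applied when $l(v)=l(u)+1$; it can only saturate $(v,u)$ (leaving nothing to check) and can only create or enlarge residual capacity on the reverse arc $(u,v)$, for which we need $l(u)\le l(v)+1$, which follows from $l(v)=l(u)+1$. A Relabel$(v)$ is applied only with the stated assertion $r_f(v,u)>0\Rightarrow l(v)\le l(u)$, so after incrementing $l(v)$ the forward invariant still holds; for arcs $(w,v)$ with $r_f(w,v)>0$, the previous bound $l(w)\le l(v)+1$ is only weakened by increasing $l(v)$. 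Taking the contrapositive at termination yields the stated claim: $l(v)-l(u)>1$ forces $r_f(v,u)=0$, i.e., arc $(v,u)$ is saturated. The only potentially subtle step is to verify that the Relabel assertion really holds whenever Relabel is invoked — this follows because the algorithm only relabels $v$ after scanning every incident edge without finding a Push to apply, which by the Push applicability condition means no neighbor $u$ with $r_f(v,u)>0$ has $l(u)=l(v)-1$; combined with the inductive invariant this gives $l(v)\le l(u)$ for all $u$ with $r_f(v,u)>0$.
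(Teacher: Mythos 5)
Your proposal is correct and follows essentially the same reasoning the paper gives inline in the observation: (1) excess can only leave $v$ via a push out of $v$, which needs $l(v)=l(u)+1\ge 1$; (2) the absorbed amount $\min(f(v),d(v))$ is monotone because pushes out of $v$ are limited by $\ex(v)$, so it equals $d(v)$ from the first moment $v$ is active, in particular at every relabel. For (3) you prove by induction the invariant $r_f(v,u)>0\Rightarrow l(v)\le l(u)+1$ (including checking the Relabel assertion via the current-edge scan), whereas the paper simply cites it as a standard push-relabel property; this is a fine, slightly more self-contained version of the same argument.
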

Although {\em Unit-Flow} may terminate with $\ex(v)>0$ for some $v$, we know all such vertices must have label $h$, as the algorithm only stops trying to route $v$'s excess supply to sinks when $v$ reaches level $h$. Thus we have the following lemma.
\begin{mylemma}
\label{lemma:excess}
Upon termination of {\em Unit-Flow} with input $(G,\Delta,U,h,w)$, assuming $\Delta(v)\leq wd(v)$ for all $v$, the pre-flow and labels satisfy
\begin{enumerate}[label={(\alph*)}]
\item If $l(v)=h$, $w d(v)\geq f(v)\geq d(v)$; 
\item If $h-1\geq l(v)\geq 1$, $w d(v) \geq f(v)=d(v)$; 
\item If $l(v)=0$, $f(v)\leq d(v)$.
\end{enumerate}
\end{mylemma}
\begin{proof}
By Observation~\ref{obs:unit-flow}.$(2)$, any vertex with label larger than $0$ must have $f(v)\geq d(v)$. The algorithm terminates when there is no active vertices, i.e. $\ex(v)>0\implies l(v)=h$, so all vertices with label below $h$ must have $f(v)\leq d(v)$. Moreover, $f(v)\leq wd(v)$ since at the beginning $f(v)=\Delta(v)\leq wd(v)$, and the push operations explicitly enforces $f(v)\leq wd(v)$ when pushing supply to $v$.
\end{proof}
Now we can prove the main result about {\em Unit-Flow}.\\
\unitFlow*
\begin{proof}
We use the labels at the end of {\em Unit-Flow} to divide the vertices into groups
\[
B_i=\{v|l(v)=i\}
\]
If $B_h=\emptyset$, no vertex has positive excess, so all $|\Delta(\cdot)|$ units of supply are absorbed by sinks, and we end up with case $(1)$.

If $B_h\neq \emptyset$, but $B_0=\emptyset$, by Lemma~\ref{lemma:excess} every vertex $v$ has $f(v)\geq d(v)$, so we have $\sum_vd(v)=2m$ units of supply absorbed by sinks, and we end up with case $(2)$.

{\bf Case (3a):} When both $B_h$ and $B_0$ are non-empty, we compute the cut $(A,V\setminus A)$ as follows: Let $S_i=\cup_{j=i}^h B_j$ be the set of vertices with labels at least $i$. We sweep from $h$ to $1$, and let $A$ be the first $i$ such that $\Phi(S_i)\leq 20(\frac{\ln 2m}{h}+\frac{w}{U})$. The properties $\forall v\in A: wd(v)\geq f(v)\geq d(v)$, and $\forall v\in V\setminus A: f(v)\leq d(v)$ follow directly from $S_h\subseteq A\subseteq S_1$. We will show that there must exists some $S_i$ satisfying the conductance bound. 

For any $i$, an edge $\{v,u\}$ across the cut $S_i$, with $v\in S_i,u\in V\setminus S_i$, must be one of the two types:
\begin{enumerate}
\item In the residual network, the arc $(v,u)$ has positive residual capacity $r_f(v,u)>0$, so $l(v)\leq l(u)+1$. But we also know $l(v)\geq i > l(u)$ as $v\in S_i,u\in V\setminus S_i$, so we must have $l(v)=i,l(u)=i-1$.
\item In the residual network, if $r_f(v,u)=0$, then $(v,u)$ is a saturated arc sending $U$ units of supply from $S_i$ to $V\setminus S_i$.
\end{enumerate}
Suppose there are $z_1(i)$ edges of the first type, and $z_2(i)$ edges of the second type. By the following region growing argument, we can show there exists some choice of $i=i^*$, such that 
\begin{equation}\label{eq:eligible}
z_1(i^*)\leq \frac{10\min(\vol(S_{i^*}),2m-\vol(S_{i^*}))\ln m}{h}
\end{equation}
If $\vol(S_{\floor*{h/2}})\leq m$, we start the region growing argument from $i=h$ down to $\floor*{h/2}$. By contradiction, suppose $z_1(i)\geq \frac{10\vol(S_i)\ln m}{h}$ for all $h\geq i\geq \floor*{h/2}$, which implies $\vol(S_i)\geq \vol(S_{i+1})(1+\frac{10\ln m}{h})$ for all $h>i\geq \floor*{h/2}$. Since $\vol(S_h)=\vol(B_h)\geq 1$ and $h\ge \ln m$, we will have $\vol(S_{\floor*{h/2}})\geq(1+\frac{10\ln m}{h})^{h/2}\gg 2m$, which gives contradiction. The case where $\vol(S_{\floor*{h/2}})> m$ is symmetric, and we run the region growing argument from $i=1$ up to $\floor*{h/2}$ instead. 

For any $i$, we can bound $z_2(i)$ as follows. Since the pre-flow pushes $z_2(i)U$ units of supply from $S_i$ to $V\setminus S_i$ along the $z_2(i)$ saturated arcs, $z_2(i)U$ is at most $\sum_{v\in S_i}\Delta(v)+z_1(i)U$, i.e. the sum of the source supply in $S_i$ and the supply pushed into $S_i$ along the $z_1(i)$ eligible arcs. As $\Delta(v)\leq wd(v)$ for all $v$, we know 
\[
z_2(i)\leq \frac{w\vol(S_i)}{U}+z_1(i)
\]
On the other hand, $z_2(i)U$ is at most $\sum_{v\in V\setminus S_i}f(v)+z_1(i)U$, as the $z_2(i)U$ units of supply pushed into $V\setminus S_i$ either remain at vertices in $V\setminus S_i$, or back to $S_i$ along the reverse arcs of the $z_1(i)$ eligible arcs. Since any $v\in V\setminus S_i$ is not with label $h$, thus $f(v)\leq d(v)$, then we get
\[
z_2(i)\leq \frac{\vol(V\setminus S_i)}{U}+z_1(i)
\]
The two upperbounds of $z_2(i)$ together give
\begin{equation}\label{eq:saturated}
z_2(i)\leq \frac{w\min(\vol(S_i),2m-\vol(S_i))}{U}+z_1(i)
\end{equation}

We know there exists some $i^*$ such that $z_1(i^*)$ is bounded by~\eqref{eq:eligible}, together with~\eqref{eq:saturated}, we have
\[
z_1(i^*)+z_2(i^*)\leq \min(\vol(S_{i^*}),2m-\vol(S_{i^*}))(\frac{20\ln m}{h} +\frac{w}{U})
\] 
thus $\Phi(S_{i^*})\leq \frac{20 \ln m}{h}+\frac{w}{U}$, which completes the proof.

{\bf Case (3b):} The proof is basically the same as the above case, but with a more careful region growing argument. In particular, we want to show there exists some $i^*,j^*$ such that $\min(\vol(S_{i^*}),2m-\vol(S_{i^*}))\leq 2^{j^*}$ and 
\begin{equation}
\label{eq:z1}
z_1(i^*)\leq \frac{\vol(S_{i^*})(\log m +1 -j^*)}{100\log m'}
\end{equation}
Assume $\vol(S_{\floor*{h/2}})\leq m$, and we run the region growing argument from $i=h$ down to $i=\floor*{h/2}$. In this case $\vol(S_i)\leq m$. The other case is similar, and we just do the region growing argument from the other side. 

Consider the groups $S_h,\ldots,S_{\floor*{h/2}}$, if we put the $\Omega(\ln m'\ln\ln m')$ groups into levels $j=1,\ldots,\log m$, such that a group $i$ is in level $j$ if $2^{j-1}\leq \vol(S_i)\leq 2^{j}$. There must be a level $j$ that gets more than $200\frac{\log m'}{\log m + 1 - j}$ groups, as long as $h>c_2\ln m'\ln\ln m'$ for some large constant $c_2$, since
\begin{align*}
\sum_{j=0}^{\log m} \frac{200\log m'}{\log m + 1 - j}&=200\log m'(1+\frac{1}{2}+\cdots+\frac{1}{\log m+1})\\
&\leq 500\ln m' \ln\ln m'
\end{align*}
Suppose this is level $j^*$, and let $i_a$ be the largest $i$ with $S_i$ in level $j^*$, and $i_b$ be the smallest $i$ with $S_i$ in level $j^*$. We know $i_a-i_b\geq 200\frac{\log m'}{\log m+1 -j^*}$, and $j^*-1\leq \log \vol(S_{i_a})\leq \log \vol(S_{i_b})\leq j^*$, thus there must be a $i^*\in [i_a,i_b]$ satisfying~\eqref{eq:z1}, since otherwise $\vol(S_{i_b})/\vol(S_{i_a})\geq (1+\frac{\log m +1 - j^*}{\log m'})^{200\frac{\log m'}{\log m+1 - j^*}}\gg 2$. Everything else follow the same arguments as in the proof of case $(3a)$ above.
\end{proof}
We proceed to prove the runtime of {\em Unit-Flow}. Recall we treat units of supply as distinct tokens, so a push operation of $\psi$ units takes $O(\psi)$ work to maintain the marks.\\
\unitFlowTime*
\begin{proof}
With a compact representation of $\Delta$, the initialization of $f(v)$'s and $Q$ takes time linear in $|\Delta(\cdot)|$. For the subsequent work, we will first charge the operations in each iteration of {\em Unit-Flow} to either a push or a relabel. Then we will in turn charge the work of pushes and relabels to the absorbed supply, so that each unit of absorbed supply gets charged $O(wh)$ work. This will prove the result, as there are at most $|\Delta(\cdot)|$ units of (absorbed) supply in total.

In each iteration of {\em Unit-Flow}, we look at the first element $v$ of $Q$, which is an active vertex with the smallest label. Suppose $l(v)=i$ at that point. If the call to {\em Push/Relabel}$(v)$ ends with a push of $\psi$ units of supply, the iteration takes $O(\psi)$ total work and we charge the work of the iteration to that push operation. If the call to {\em Push/Relabel}$(v)$ doesn't push, we charge the $O(1)$ work of the iteration to the relabel of $l(v)$ to $i+1$. If there is no such relabel, i.e. $i$ is the final value of $l(v)$, we know $i\neq 0$ by Observation~\ref{obs:unit-flow}$(1)$, then we charge the work to the final relabel of $v$. Since a relabel of $v$ must be incurred when $d(v)$ consecutive calls to {\em Push/Relabel}$(v)$ end with non-push, each relabel of $v$ takes $O(d(v))$ work by our charging scheme above.

So far we have charged all the work to pushes and relabels, such that pushing $\psi$ units of supply takes $O(\psi)$, and each relabel takes $O(d(v))$. We now charge the work of pushes and relabels to the absorbed supply. We consider the absorbed supply at $v$ as the first up to $d(v)$ units of supply starting at or pushed into $v$, and these units never leave $v$.

By Observation~\ref{obs:unit-flow}$(2)$, each time we relabel $v$, there are $d(v)$ units of absorbed supply at $v$, so we charge the $O(d(v))$ work of the relabel to the absorbed supply, and each unit gets charged $O(1)$. A vertex $v$ is relabeled at most $h$ times, so each unit of absorbed supply is charged with $O(h)$ in total by all the relabels.

For the pushes, we consider the potential function 
\[
\Lambda=\sum_v \ex(v)l(v)
\]
Each push operation of $\psi$ units of supply decrease $\Lambda$ by exactly $\psi$, since $\psi$ units of excess supply is pushed from a vertex with label $i$ to a vertex with label $i-1$. $\Lambda$ is always non-negative, and it only increases when we relabel some vertex $v$ with $\ex(v)>0$. When we relabel $v$, $\Lambda$ is increased by $\ex(v)$. Since $\ex(v)\leq f(v)\leq wd(v)$, we can charge the increase of $\Lambda$ to the absorbed supply at $v$, and each unit gets charged with $O(w)$. In total we can charge all pushes (via $\Lambda$) to absorbed supply, and each unit is charged with $O(wh)$. 

If we need to compute the cut $A$ as in case $(3)$ of Theorem~\ref{thm:unit-flow}, the runtime is $O(\vol(S_1))$. Recall $S_1$ is the set of vertices with label at least $1$, thus all with $d(v)$ units of absorbed supply, so $\vol(S_1)$ is at most $|\Delta(\cdot)|$.
\end{proof}

\subsection{Analysis of excess scaling procedure} 
Now we prove the main result of the excess scaling procedure, which is restated below.
\label{app:excessscaling}
\excessScaling*
\begin{proof}
Consider the call to {\em Unit-Flow} in one iteration of the flow procedure with the unit being $\mu$. The edge capacity used in {\em Unit-Flow} is $U$ units, i.e. $\mu U$, and the total source supply to {\em Unit-Flow} is at most $\frac{2m}{\mu}$ units, so the runtime of {\em Unit-Flow} is $O(\frac{mh}{\mu})$. As $\mu$ decreases geometrically starting with $F$, the total edge capacity used through the procedure is $2UF$, and the total runtime will be $O(\frac{mh}{\mu})$ for the $\mu$ at termination. 

The procedure will terminate either when each vertex $v$ gets at most $d(v)$ supply, which must happen once $\mu$ drops to $1$, or in some iteration $j$ we have
\begin{equation}
\label{eq:stop}
\vol(A_j)\geq \frac{\tau 2m}{10\mu\ln 2\mu \ln \ln m}.
\end{equation}
Note that the total amount of supply in any iteration is upper bounded by $|\Delta(\cdot)|=2m$, so {\em Unit-Flow} never finishes with case $(2)$ of Theorem~\ref{thm:unit-flow}. If {\em Unit-Flow} finishes with case $(1)$ in some iteration $j$, then $A_j=\emptyset$, and there is no excess on any node at the end of the termination, so no supply will be removed.

We need to argue (corresponding to the two cases in this Lemma respectively)
\begin{enumerate}
\item If we don't terminate early due to Eqn.~\eqref{eq:stop}, then at least $(1-\tau)$ fraction of the total source supply is routed to sinks.
\item If we terminate due to Eqn.~\eqref{eq:stop}, we must have $\vol(K)$ being  $\Omega(\frac{m}{F\ln m\ln\ln m})$, and the runtime in this case is $O(\vol(K)h\ln \frac{m}{\vol(K)}\ln\ln m)$, where $K$ is the side of $(A_j,\bar{A_j})$ with smaller volume. The conductance of the cut in this case follows from Theorem~\ref{thm:unit-flow} case $(3a),(3b)$.
\end{enumerate}

We first show case (1). 
By Theorem~\ref{thm:unit-flow} with $w=2$, the $A_j$ we get from {\em Unit-Flow} satisfies $2d(v)\geq f(v)\geq d(v)$ for all $v\in A_j$, and $f(v) \le  d(v)$ for all $v\in V\setminus A_j$. So we have $\ex_j (A_j)\leq \vol(A_j)\leq \frac{2m}{\mu_j}$ where we use $\mu_j$ to denote the value of $\mu$ in the iteration of $j$ and
$ex_j(\bar{A_j}) =0$. If we never have Eqn.~\eqref{eq:stop}, we know for all $j$
\[
\ex_j(A_j)\mu_j\leq \vol(A_j)\mu_j\leq \frac{\tau 2m}{10\ln 2\mu_j \ln\ln m}
\]
and if we add up the excess removed from all iterations, we have 
\[
\sum_j \ex_j(A_j)\mu_j\leq \frac{\tau m}{5\ln\ln m}(\sum_{j=0}^{\ln F}\frac{1}{j+1})\leq \tau 2m
\]
so the total supply remaining is at least $(1-\tau)2m$, and we have case (1). In this case when we terminate $\mu$ is at least $1$, so the runtime is $O(mh)$.

For case (2), let $j$ be the iteration when we get Eqn.~\eqref{eq:stop}, and we look at the cut $A_j$ returned. Let $K$ be the smaller side of $A_j$ and $\bar{A_j}$. The conductance of cut $(A_j,V\setminus A_j)$ follows from Theorem~\ref{thm:unit-flow} with $w=2$. We proceed to prove the runtime bound. We look at the two cases:
\begin{itemize}
\item $\mu_j\geq 2$ at termination: 
Note that $A_j$ is the smaller side of the cut in this case, i.e., $K= A_j$. Since $\vol(A_j)\leq \frac{2m}{\mu_j}$, we have $\mu_j\leq \frac{2m}{\vol(A_j)}$, thus we can rewrite Eqn.~\eqref{eq:stop} as
\[
\vol(A_j)\geq \frac{\tau 2m}{10\mu_j\ln \frac{4m}{\vol(A_j)} \ln \ln m}.
\] 
The running time is $O(\frac{m}{\mu_j}h)$, which is also $O(\vol(K)h\ln \frac{m}{\vol(K)}\ln\ln m)$.

\item $\mu_j=1$ at termination: If $\vol(A_j)\leq m$, $A_j$ is still the smaller side and we have the same argument as above. 
If, however, $\vol(A_j)\geq m$, we consider again two cases. If $\vol(A_j)\geq (1-\tau) 2m$, then at least a $(1-\tau)$ fraction of the total supply is routed to sinks and the running time is $O(mh)$, i.e. 
we are in Case $(1)$ of the lemma. If, however,  $ m \geq \vol(A_j) < (1-\tau) 2m$, then both $\vol(A_j)$,$\vol(V\setminus A_j)$ are $\Theta(m)$, since $\tau$ is a constant.
This implies that the running time, which is $O(mh)$, is $O(\vol({K})h)$.
\end{itemize}
In both cases, the running time is $O(\vol(K)h\ln \frac{m}{\vol(K)}\ln\ln m)$.

It remains to show $\vol({K})$ is $\Omega(\frac{m}{F\log m})$. From the above discussion, we know that if we end with case $(2)$ of the lemma, then either $\vol({A_j})\leq m$, or $\vol(\bar{A_j})\geq \frac{\tau}{1-\tau}\vol({A_j})$. Thus Eqn~\eqref{eq:stop} implies $\vol(K)$ is $\Omega(\frac{m}{F\ln m\ln\ln m})$.
\end{proof}

\vspace{-0.1in}
\newcommand{\ccc}{{1000}}
\section{The Kawarabayashi-Thorup decomposition framework}
\label{sxn:outerloop}
In the rest of the paper, we show how to modify the algorithm in~\cite{KT18} (the K-T algorithm) to work with $\alpha$-balanced directed graphs, and to use the efficient flow procedures in Section~\ref{sxn:flow}. Eventually we get a $O(m\alpha^4 \ln^2 m \ln\ln^2 m)$ algorithm for computing minimum cut in $\alpha$-balanced directed simple graphs. The structure of the algorithm (Algorithm~\ref{alg:KT}) is as follows: It consists of three nested loops,  which we call (a) the outer loop, (b) the middle loop, and (c)  the inner procedure. Both the algorithm of~\cite{KT18} and our algorithm use this structure, and we divide it  
into two layers in our discussion: the inner procedure where we replace with our own so it can work with our flow subroutine, and the K-T framework (i.e. everything outside the inner procedure) which we largely follow~\cite{KT18}, but extended to balanced directed graphs. We have a clean interface between the two, which is formally presented as Theorem~\ref{thm:middle-main}. We will discuss the K-T framework and the interface in this section. The modification of the K-T framework is only for the purpose of making it work on balanced directed graphs. We note both our extension of the K-T framework to balanced directed graphs and the extension to preserve approximate min cuts in~\cite{KT18} are straightforward adaptations of the original approach of~\cite{KT18} on min cuts.

The adaptation is mostly straightforward, and is similar to how~\ref{thm:middle-main} adapts their algorithm to also preserve approximate min-cuts. 
 
Given a $\alpha$-balanced directed simple graph $G=(V,E)$ with in-degree and out-degree lower bounded by $\delta$ (i.e., $\forall v\in V:\ind(v),\outd(v)\geq \delta$), the decomposition framework computes a (multi-)graph $\GG$ with ${O}(\frac{m_G\alpha^4 \ln m_G}{\delta})$ edges, while preserving all non-trivial minimum cuts of $G$. Note that $\delta$ upperbounds the value of the minimum cut, and when $\delta$ is $O(\alpha^4 \ln m_G)$, $G$ has $O(\frac{m_G\alpha^4 \ln m_G}{\delta})$ edges already. In this case, we run Gabow's algorithm directly on $G$, and it will be efficient. Otherwise, i.e. $\delta=\Omega(\alpha^4 \ln m_G)$, we construct and run Gabow's algorithm on the graph $\GG$.

The high-level approach is to start with $\GG=G$, and recursively contract subsets of nodes into {\em supervertices} to reduce the size of $\GG$ (the outer loop in Algorithm~\ref{alg:KT}), while preserving all non-trivial minimum cuts of $G$. Throughout the algorithm, a node (or vertex) in $\GG$ is either a regular vertex (i.e. a vertex in $G$) or a supervertex (i.e. a subset of vertices of $V$). We consider a supervertex to be both a node in $\GG$ and a subset of vertices of $V$, and supervertices can be further contracted with other nodes through the algorithm. 
At any point, the supervertices (as subsets of $V$) and the regular vertices (as singleton sets) in $\GG$ give a partition of $V$. All edges of $G$, except those whose both endpoints are collapsed into the same supervertex, are in $\GG$. In particular, the in-degree and out-degree of any regular vertex in $\GG$ are both at least $\delta$, and any cut in $\GG$ can be mapped back to a unique cut in $G$. A cut $(S,\bar{S})$ of $G$ survives in $\GG$ iff all edges in $E(S,\bar{S})$ still remain in $\GG$, that is, all the regular nodes contracted in a supervertex of $\GG$ lie entirely on one side of $(S,\bar{S})$. Through our algorithm, all non-trivial minimum cuts of $G$ will survive in $\GG$. 
\begin{algorithm}
\caption{Kawarabayashi-Thorup framework}
\label{alg:KT}
\fbox{\parbox{0.97\textwidth}
{
$\GG \leftarrow G$; $G$ has min degree $\delta\geq \Omega(\alpha^4 \ln m_G)$\\
{\bf Repeat (Outer loop)} \\
\tab $H \leftarrow \GG$, and $\tilde{H}$ denotes undirected version of $H$ .\\
\tab Remove passive supervertices from $\tilde{H}$ and trim $\tilde{H}$.\\
\tab {\bf While} {\em $\exists$ a connected component $C$ in $\tilde{H}$ is not a certified cluster} {\bf do (Middle loop)}\\
\tab \tab Let $s$ be the smallest integer such that $C$ is certified $s$-splittable.\\
\tab \tab {\bf Inner procedure:} Achieve either $(1)$, $(2)$, or $(3)$ (Theorem~\ref{thm:middle-main})\\
\tab \tab \tab $(1)$ Find low conductance cut ($A,\bar{A}$) of $C$ in time $\tilde O(\alpha^2\min(\vol_C(A), \vol_C(\bar{A}))$.\\
\tab \tab \tab $(2)$ Find low conductance cut ($A,\bar{A}$) of $C$ in time $\tilde O(\alpha^2 m_C)$,\\ 
\phantom{\tab \tab \tab} where $A$ is certified $\cc s$-splittable in $\GG$, and $\vol_C(A)$ is $\Theta(m_C)$.\\
\tab \tab \tab (3) Certify that $C$ is $\cc s$-splittable in time $\tilde O(\alpha^2 m_C)$. \\
\tab \tab {\bf End inner procedure}\\
\tab \tab  If a cut was found (case (1)(2)), remove the cut edges from $\tilde{H}$ and trim $\tilde{H}$. \\
\tab {\bf End while}\\
\tab Take each cluster component of $H$, and contract its core (if exist) to a supervertex in $\GG$.\\
\tab Contract any two vertices that have more than $\alpha \delta$ edges between them.\\
{\bf Until} {$\geq \frac{1}{20\alpha}$ of the edges in $\GG$ are incident to passive supervertices.}
}
}
\end{algorithm}
In each iteration of the outer loop, the algorithm computes (disjoint) subsets of nodes in $\GG$ that can be contracted. More specifically, we maintain $H$, with $H=\GG$ at the start of the iteration, edges and nodes will be removed through the iteration, and at the end, $H$ will be a collection of connected components such that each component must fall entirely on one side of any minimum cut, and, thus, can be contracted. Note that our algorithm will ignore the direction of the edges in $H$ until all connected component in $H$ (when treating $H$ as undirected) are {\em clusters} (to be defined shortly), i.e., when the middle loop in Algorithm~\ref{alg:KT} finishes. We denote $\tilde{H}$ as the undirected version of $H$, and refer to $\tilde{H}$ when the algorithm operates as if $H$ is undirected.

{\bf Passive supervertex and trimming.} At the start of the iteration, supervertices with (undirected) degree less than $c_1\alpha^2 \gamma \delta$ (called {\em passive} supervertices) are removed from $\tilde{H}$, where $c_1$ is a suitably chosen constant, and $\gamma=\Theta(\alpha\ln m_G)$. Throughout the iteration, whenever the algorithm removes edges and nodes from $\tilde{H}$, it will also {\em trim} $\tilde{H}$, which is to recursively remove from $\tilde{H}$ any vertex (regular or not) that has lost more than $\frac{3}{5}$ of its degree (comparing to its undirected degree in $\GG$). In particular, every connected component $C$ in $\tilde{H}$ will be {\em trimmed}, i.e. $\forall v\in C:d_C(v)=d_{H}(v)\geq \frac{2}{5}d_{\GG}(v)$.

Our goal is to detect subsets $C$ of nodes that can be contracted to supervertices. Thus for each such subset $C$ there cannot be a cut $S$ in $\GG$ such that $S$ maps back to a non-trivial minimum cut in $G$, and $|S \cap C|, |\bar{S} \cap C| \ge 1$. We call this condition on a set of nodes {\bf \em Requirement R1}. To find components that can be contracted, the algorithm first finds {\em clusters}.
\begin{definition}
\label{def:cluster}
A trimmed subset $C$ of $\tilde{H}$is a {\em cluster} if for any cut of undirected cut-size at most $\alpha \delta$ in $\GG$, at least one side of the cut contains less than $3\alpha$ regular vertices from $C$ and no supervertex from $C$. 
\end{definition}
When the middle loop in Algorithm~\ref{alg:KT} finishes, all connected components in $\tilde{H}$ are clusters, and we call them cluster components. A cluster component will almost entirely fall in one side of any directed minimum cut of $\GG$, and given a cluster component, it is easy to get its {\em core}, which is the part that can be contracted. 

\subsection{Construction of the core for a cluster} 
In this step the algorithm will work with $H$ (as directed graph). We now distill from each cluster component $C$ a subset of nodes that fullfils R1. To do so we 
consider $C$ to be a set of vertices and remove from it (and from $H$) all {\em loose} vertices, as well as edges incident to loose vertices.
\begin{definition}
\label{def:loose} 
A vertex $v$ in a cluster component $C$ is loose if it is a {\em regular} vertex, and {\em either} one of the following conditions holds.
\begin{enumerate}
\item $3\alpha + |E_{\GG}(\bar{C},v)| \geq |E_{\GG}(v,C)|-3\alpha$,
\item $3\alpha + |E_{\GG}(v,\bar{C})| \geq |E_{\GG}(C,v)|-3\alpha$.
\end{enumerate}
Recall that $E_{\GG}(v,C)$ denotes the set of edges (in $\GG$) directed from $v$ to nodes in $C$, and here $\bar{C} = \GG \setminus C$ is the set of nodes in $\GG$ but not in $C$.
\end{definition}

{\bf Shaving.} By {\em shaving} we refer to the operation of removing loose vertices, as well as their incident edges, from clusters (and also from $H$). As mentioned, a cluster component will almost entirely fall in one side of any directed minimum cut of $\GG$ (up to $3\alpha$ regular vertices can fall on the other side), and we will show (in Lemma~\ref{lemma:core}) that all these regular vertices that fall on the other side must be loose. Let $A\subset C$ be the set of nodes remaining in a cluster after shaving, we know $A$ fulfills R1, thus it can be contracted. However, we will only contract the nodes in $A$ to a supervertex if it has ``high enough'' internal volume.
\begin{definition}
\label{def:core}
For a cluster component $C$, let $A$ be the cluster $C$ after shaving, i.e., after removing loose vertices from $C$. We call $A$ a {\em core} of $C$ if the total number of edges in $\GG$ that are internal to $A$ (i.e., edges between nodes of $A$) is at least a $\frac{1}{\core}$ fraction of the total number of edges in $\GG$ that are incident to nodes of $C$.
\end{definition}  
{\bf Scraping and construction of the core.} If $A$ is not a core of the cluster $C$ then we remove all nodes in $A$ from $H$, which is called {\em scraping} $A$. After shaving and scraping, all that left in $H$ are cores of clusters, and we contract each core to a supervertex. This is the only way that supervertices are created.

\subsection{Finding cluster components}
As it is fairly simple to get the core from a cluster, the major work lies in finding cluster components.
%
We have the following measure of how close $C$ is to a cluster.
\begin{definition}
\label{def:splittability}
A connected component $C$ of $\tilde{H}$ is {\em $s$-splittable} if every cut $(S,\bar{S})$ of $\GG$ with undirected cut-size at most $\alpha\delta$ satisfy $\min(\vol_C(S\cap C),\vol_C(\bar{S}\cap C))\leq s$. We call $s$ the splittability of $C$.
\end{definition}
Informally, the smaller the splittability of $C$, the closer $C$ is to fall entirely in one side of any minimum cut of $\GG$ (as any directed minimum cut of $\GG$ will have undirected cut-size at most $\alpha\delta$). Note that a component $C$ is by definition $m_C$-splittable, and any subcomponent of a $s$-splittable component is also $s$-splittable. 
The strategy of the algorithm is to drive down the strengh of the connected components in $\tilde{H}$, and we will show (in Lemma~\ref{lem:cluster}) that a sufficient condition for a trimmed component $C$ to be a cluster is when its splittability is at most $s_0$ for some $s_0=\Theta(\delta\gamma\alpha^2)$.

To get components of smaller splittability, the algorithm relies on the inner procedure (See Theorem~\ref{thm:middle-main}). Each time the inner procedure is invoked, it is given a trimmed component $C$ in $H$ that is already certified to be $s$-splittable for some $s>s_0$. The inner procedure will either certify that $C$ is $\cc s$-splittable, or find a low conductance cut in $C$. In the latter case, we can remove the cut edges from $\tilde{H}$, and break $C$ into smaller components. This is useful since the volume of a component is a trivial bound on its splittability. The low conductance is crucial, as we need to bound the total number of cut edges removed during the entire process. Ultimately,  at least a constant fraction of the edges from $\GG$ will remain in the cores (of the cluster components) in $H$, which are contracted at the end, so the volume of $\GG$ drops geometrically across outer loop iterations (See Lemma~\ref{lemma:outer-halved} for details). 

Finally, when the cores are contracted and new supervertices are created, we can contract any vertices $u,v$ that have more than $\alpha\delta$ edges between them (the total of both directions). The reason is that if $u,v$ have more than $\alpha \delta$ undirected edge between them, then any cut in $G$ seperating $u,v$ must have undirected cut-size more than $\alpha\delta$, and by $\alpha$-balance, the cut must have directed cut-size more than $\delta$ in both directions, thus cannot be a minimum cut. Contracting these nodes will preserve all minimum cuts, and guarantee that at any point no pair of nodes in $\tilde{H}$ have more than $\alpha\delta$ parallel (undirected) edges between them. Note that for any two regular nodes in $\tilde{H}$, we further know that there can be at most $2$ parallel (undirected) edges, as the original directed graph is simple.

The runtime of each invocation of the inner procedure is proportional to the progress made in that invocation. That is, if it only finds a low conductance cut $(A, \bar A)$, the runtime is local (i.e., proportional to the size of $A$). If, however, the inner procedure spends $\tilde{O}(m_C\alpha^2)$ time on a component $C$, it certifies a smaller splittability for $C$ (or for a subcomponent of volume $\Theta(\vol(C))$ of $C$). See Section~\ref{sxn:runtime} for more details.

\subsection{Analysis}
\label{app:section2}
In this section we show the correctness of the K-T framework (Algorithm~\ref{alg:KT}).
To prove the correctness  we show (1) the termination of the middle loop, (2) the termination of the outer loop, (3) that no non-trivial minimum cut of the original graph $G$ gets contracted into a supervertex, and (4) that the resulting graph $\GG$ contains $O(m_G\alpha^4\ln m_G/\delta)$ edges.

To guarantee the termination of the middle loop we have to show that at some point all connected components of $\tilde{H}$ are clusters or individual nodes. Starting with an $s$-splittable component $C$ in $\tilde{H}$,
each iteration of the middle loop either reduces the size of $C$ (and potentially shows that one of the new connected components is $\cc s$-splittable) or shows that $C$ is $\cc s$-splittable,
 i.e., either reducing the size of $C$ or its splittability. Note that removing edges of $\tilde{H}$ does not increase the splittability of its components, i.e., any component in $\tilde{H}$ that was $s$-splittable is also $s$-splittable after edge removal. Together with Lemma~\ref{lem:cluster} below, eventually every connected component in $\tilde{H}$ must either has size 1 or is a cluster, so the middle loop will always terminate.
\begin{mylemma}
\label{lem:cluster}
Let $s_0 = \ccc \alpha^2 \gamma \delta$, any trimmed $s_0$-splittable connected component $C$ in $H$ is a cluster.
\end{mylemma}
\begin{proof}
The proof crucially relies on the removal of passive supervertices and the trimming of $H$. 
Let $C$ be a trimmed $s_0$-splittable connected component of $\tilde{H}$, we have to show that 
for every cut in $\GG$ of undirected cut-size at most $\alpha\delta$,  one side contains
(a) no supervertex of $C$ and (b) less than $3\alpha$ regular vertices of $C$. Let $(S,\bar{S})$ be such a cut, and $B = C \cap S$, and $D = C \cap \bar{S}$. Since $C$ is an $s_0$-splittable connected component in $\tilde{H}$, we have that $\min (\vol_C(B), \vol_C(D)) \leq s_0$. We assume WLOG $\vol_C(B) \leq s_0$.

(a) We first show that $B$ contains no supervertex. As all passive supervertices were removed, and $C$ is trimmed, any remaining supervertex $v$ in $C$ has $d_C(v)\geq \frac{2}{5} c_1 \alpha^2\gamma \delta$. If $B$ contains $v$, we must have $\vol_C(B)\geq d_C(v)$, and as long as $s_0$ is chosen to be less than $\frac{2}{5} c_1 \alpha^2\gamma \delta$, we have a contradiction. Thus, our choice of a suitable $s_0=\Theta(\delta\gamma\alpha^2)$ guarantees that $B$ contains no supervertex from $C$.

(b) We argue next that $B$ contains less than $3\alpha$ regular vertices. Every regular node $v$ in $B$ has degree more than $4 \delta/5$ in $\tilde{H}$, since $d_{\GG}(v)=\ind_{\GG}(v)+\outd_{\GG}(v)$ is at least $2\delta$, and $v$ remains in the trimmed component $C$. Since all nodes in $B$ are regular nodes, there can be at most $2$ parallel (undirected) edges between any two nodes of $B$ (as our original directed graph is simple). Let $x$ be the number of vertices in $B$, then we know for any $v\in B$, at most $2x$ of the its incident edges can go to other vertices in $B$, thus at least $4 \delta/5-2x$ edges must go to vertices in $D$. As $(S,\bar{S})$ has cut-size at most $\alpha\delta$, the number of edges between nodes in $B$ and nodes in $D$ is also at most $\alpha\delta$, thus we have
\[
x(4 \delta/5-2x)\leq \alpha\delta.
\]
For contradiction, assume $x\geq 3\alpha$. From the above inequality, we must have $4 \delta/5-2x \leq \delta/3$, which gives $x\geq 7\delta/30$. Then $\vol_C(B)\geq (7\delta/30)\cdot (4\delta/5)$. Recall that $\delta=\Omega(\alpha^4 \ln m_G)$, and $s_0=\Theta(\alpha^2\gamma\delta)$ with $\gamma=\Theta(\alpha\ln m_G)$. With the right choice of constants, we have $(7\delta/30)\cdot (4\delta/5)>s_0$, which contradicts $C$ being $s_0$-splittable. Thus, we must have $x<3\alpha$, i.e., $B$ has less than $3\alpha$ regular vertices.
\end{proof}
The while loop (middle loop) terminates when all connected components remaining in $\tilde{H}$ are clusters, and as discussed earlier, the algorithm contracts the cores of the clusters into supervertices. The following lemma shows that every non-trivial minimum cut of the original graph $G$ survives the contraction of the cores.
\begin{mylemma}
\label{lemma:core}
If a non-trivial minimum cut of $G$ has survived in $\GG$ at the beginning of an iteration of the outer loop, then it will survive when we contract the core of any cluster component at the end of the iteration.
\end{mylemma}
\begin{proof}
Note that if a non-trivial min cut of $G$ survives in $\GG$, then it must still be a min cut $(S,\bar{S})$ of $\GG$. WLOG, assume the minimum cut is from $S$ to $\bar{S}$, then we have $|E_{\GG}(S,\bar{S})|\leq \delta$, and since $G$ is $\alpha$-balanced, $\GG$ is also $\alpha$-balanced, so the undirected cut-size of $S$ is at most $\alpha\delta$. Since the original min cut is non-trivial, we have that both $S$ and $\bar{S}$ have at least two regular nodes or one supervertex. 

Consider some core, $A$, and its associated cluster component $C$. Since $C$ is a cluster, we know one of $S$ or $\bar{S}$ has less than $3\alpha$ regular vertex and no supervertex from $C$. We consider the two cases.
\begin{enumerate}
\item If $C\cap S$ has at most $3\alpha$ regular vertices and no supervertex, we take any regular vertex $v\in C\cap S$. Consider moving $v$ from $S$ to $\bar{S}$, and how this affects the number of edges directed from $S$ to $\bar{S}$ (i.e., $E_{\GG}(S,\bar{S})$). Note since the original min cut is non-trivial, we know that $S$ and $\bar{S}$ both have at least two regular nodes or one supervertex. Thus, after moving the regular vertex $v$ from $S$ to $\bar{S}$, we still have a valid cut (i.e., $S$,$\bar{S}$ both non-empty).

When we move $v$ from $S$ to $\bar{S}$, all edges in $E_{\GG}(v,C\cap\bar{S})$ will be removed from $E_{\GG}(S,\bar{S})$, and there are at least $|E_{\GG}(v,C)|-3\alpha$ such edges, as $|C\cap S|\leq 3\alpha$. On the other hand, all edges from $\bar{C}\cap S$ to $v$ (at most $|E_{\GG}(\bar{C},v)|$ such edges), as well as those edges to $v$ from the (less than $3\alpha$) other regular vertices in $C\cap S$ will be added to $E_{\GG}(S,\bar{S})$. Thus, $|E_{\GG}(S,\bar{S})|$ will decrease by at least
\[
(|E_{\GG}(v,C)| - 3\alpha) - (|E_{\GG}(\bar{C},v)|+3\alpha).
\]
Since $(S,\bar{S})$ is a minimum cut in $\GG$, we must have $(|E_{\GG}(v,C)| - 3\alpha) \leq (|E_{\GG}(\bar{C},v)|+3\alpha)$, which means $v$ is loose, and will not remain in $A$.
\item The other case is when $C\cap \bar{S}$ has at most $3\alpha$ regular vertices and no supervertex. The argument is symmetric to the case above, and corresponds to the second condition in the definition of loose vertices (Definition~\ref{def:loose}).
\end{enumerate}
From the above discussion, we know that after removing loose vertices, all nodes remaining in $A$ must lie entirely in one side of $(S,\bar{S})$. Thus, the minimum cut will survive the contraction of $A$.
\end{proof}

The above argument actually shows that it is safe to contract the set of remaining nodes $A$ after shaving a cluster component $C$. However, the algorithm only contracts $A$ when it is a core, since otherwise we may end up with too many supervertices. Recall $A$ is a core when it has large internal volume, and formally it can be shown that every supervertex has $\Omega(\delta^2)$ volume contracted inside it. This will bound the total number of supervertices by $O(\frac{m_G}{\delta^2})$. Together with the degree bound on passive supervertices, we have the lemma below. 
\begin{mylemma}
\label{lem:passive}
The total number of edges in $\GG$ incident to passive supervertices
is $O(\frac{m_G\gamma\alpha^2}{\delta})$.
\end{mylemma}
\begin{proof}
If a supervertex $a$ was constructed by the contraction of a core $A$ we define the {\em volume of $a$} to be $\vol_G(A)$, i.e., the volume of the set of contracted regular nodes in the original graph $G$. 
We will show any supervertex has volume at least $ \delta^2/5$. Although a supervertex can be further contracted into a new supervertex, the new supervertex must have even larger volume. Thus, it suffices to prove the volume lower bound for the supervertices that only have regular nodes contracted inside. 

Consider a supervertex $a$ formed by contracting the core $A$ of some trimmed cluster component $C$ which only has regular nodes. Any regular node $v$ has $d_{\GG}(v)\geq 2\delta$, and if $v$ remains in the trimmed component $C$, we know $v$ has at least $2d_{\GG}(v)/5$ undirected edges to other regular nodes in $C$. Since there are at most two parallel undirected edges between any pair of regular nodes, we must have at least $2\delta/5$ regular nodes in $C$, which means the number of edges in $\GG$ incident to nodes of $C$ is at least $4\delta^2/5$. Since $A$ is a core of $C$, we know the total number of internal edges of $A$ is at least $\delta^2/5$. Thus, the volume of the supervertex $a$ (i.e. $\vol_{G}(A)$) is at least $\delta^2/5$.

For any passive supervertex, its degree in $\GG$ is at most $O(\alpha^2 \gamma \delta)$, which is at most a $O(\alpha^2\gamma /\delta)$ fraction of the volume contracted inside the passive supervertex. As the total volume of the original graph is $2m_G$, the total number of edges incident to passive supervertices is at most $O(m_G\alpha^2\gamma/\delta)$, which is $O(m_G\alpha^3\ln m_G/\delta)$. 
\end{proof}
This lemma, together with the termination condition that at least $\frac{1}{20\alpha}$ fraction of the edges in $\GG$ are incident to passive supervertices, gives that $\GG$ has $O(m_G\alpha^4\ln m_G/\delta)$ edges at termination.

So far we have shown Part (1),(3),(4) of the correctness proof, and what remains is the termination of the outer loop. For this purpose we will show that the number of edges in $\GG$ decreases geometrically in every iteration of the outer loop, while the number of edges in $\GG$ incident to
passive supervertices does not decrease as a supervertex, once it is passive, is always removed from $H$ at the beginning of the iteration, and thus, it is never contracted into a new supervertex. Thus, the outer loop terminates after
$O(\log m_G)$ iterations.

To show the reduction on number of edges in $\GG$, we proceed as follows.
Let $\bar m$ be the number of edges in $\GG$ at the begining of an iteration of the outer loop, and let
$\bar m'$ be the number of edges at the end of the iteration, which equals the number of edges at the beginning of the next iteration. Since all the edges in $H$ at the end of the iteration (i.e., those internal to the cores) are contracted, the edges remaining in $\GG$ at the end of the iteration are exactly those edges removed from $H$ through the iteration. In particular, these edges are of the following four types. 
\begin{enumerate}
\item were incident to a passive supervertex and removed from $H$ at the start of the iteration ({\em Type-1 edges}),
\item were trimmed from $H$ ({\em Type-2 edges}),
\item were on a  low-conductance cut found by the inner procedure, and removed from $H$ during the middle loop ({\em Type-3 edges}),
\item were shaved and scraped from a cluster to get the core({\em Type-4 edges}).
\end{enumerate}
Our goal is to show that there are at most $9 \bar m/10$ such edges (i.e., $\bar m' \le 9\bar m / 10$), which guarantees the number of edges in $\GG$ drop geometrically over iterations of the outer loop.

For this we proceed as follows. Let $\bar m_i$ denote the number of Type-$i$ edges for $i = 1, 2, 3, $ and $4$.
In Lemma~\ref{lem:cut-edge} we will show that $\bar m_1 + \bar m_3 \le \bar m/(10\alpha)$. We then prove in Lemma~\ref{lem:trim-shave} that  $\bar m_2 + \bar m_4 \leq 8\alpha(\bar m_1 + \bar m_3)$. Putting these results together we then conclude in Lemma~\ref{lemma:outer-halved}
that $\bar m' \leq  9 \bar m/10$, which is what we wanted to show.
To proceed we first show an auxiliary lemma.

\begin{mylemma}\label{lem:core2}
If a cluster component $C$ of $\tilde{H}$ has undirected cut-size $k$ in $\GG$ (i.e., $\partial_{\GG}(C)=k$), then there are at most $4\alpha k$ edges (in $\GG$) that are incident to nodes of $C$ but not internal to its core. 
\end{mylemma}
\begin{proof}
In this proof, the edges we consider are those in $\GG$. Let $T$ be the set of nodes shaved from $C$ (i.e., the loose regular vertices), let $A = C \setminus T$,  and let $l$ be the number of edges between $T$ and $\bar C=\GG \setminus C$ (in $\GG$). Then we know there are $k-l$ edges between $A$ and $\bar C$ (in $\GG$). 

We first bound the total number of edges incident to loose vertices (i.e., vertices in $T$). For a loose vertex $v$, there are two cases (corresponding to the two cases in the definition of loose vertex (Definition~\ref{def:loose}).
\begin{enumerate}
\item $v$ is loose due to $3\alpha + |E_{\GG}(\bar{C},v)| \geq |E_{\GG}(v,C)|-3\alpha$. There are two sub-cases here. (i) If $|E_{\GG}(v,\bar{C})|\geq \outd_{\GG}(v)/2$ (i.e., more than half of the outgoing edges of $v$ go to nodes not in $C$, and thus, are removed in $H$), then we know
\[
\ind_{\GG}(v)+\outd_{\GG}(v)\leq \alpha \outd_{\GG}(v)\leq 2\alpha|E_{\GG}(v,\bar{C})|.
\]
(ii) If $|E_{\GG}(v,\bar{C})|\leq \outd_{\GG}(v)/2$, then $|E_{\GG}(v,C)|\geq \outd_{\GG}(v)/2\geq \delta/2$. This, together with condition that $v$ is loose, gives
\[
|E_{\GG}(\bar{C},v)|\geq |E_{\GG}(v,C)| - 6\alpha \geq \frac{\outd_{\GG}(v)}{2}-6\alpha > \frac{\outd_{\GG}(v)}{3},
\]
where the last inequality is due to $\outd_{\GG}(v)\geq \delta > \alpha^4\ln m_G$. Thus, we have
\[
\ind_{\GG}(v)+\outd_{\GG}(v)\leq \alpha \outd_{\GG}(v)\leq 3\alpha|E_{\GG}(\bar{C},v)|
\]
Put the bounds we get from cases (i) and (ii) together, we know in both cases, 
\[
d_{\GG}(v)\leq 3\alpha(|E_{\GG}(\bar{C},v)|+ |E_{\GG}(v,\bar{C})|) 
\]
\item $v$ is loose due to $3\alpha + |E_{\GG}(v,\bar{C})| \geq |E_{\GG}(C,v)|-3\alpha$. By a symmetric argument of the case above, we can also conclude
\[
d_{\GG}(v)\leq 3\alpha(|E_{\GG}(\bar{C},v)|+ |E_{\GG}(v,\bar{C})|) 
\]
\end{enumerate}
Thus, for any loose vertex $v$, we know $d_{\GG}(v)$ is at most a $3\alpha$ factor larger than the number of edges between $v$ and nodes not in $C$. Thus, the total volume of the loose vertices is $\vol_{\GG}(T)\leq 3\alpha l$, which means that the total number of edges incident to nodes of $C$ but not internal to $A$ (i.e., not between two nodes of $A$) is at most $3\alpha l + k- l \leq 3\alpha k$. This proves the lemma unless the core is empty (i.e, $A$ is scraped). 

If the core is empty, we know the total number of edges internal to $A$ is at most $1/\core$ of the edges incident to $C$, so the total number of edges internal to $A$ is at most $1/3$ of the number of edges incident to $C$ but not internal to $A$. This suggests that at most $\alpha k$ edges are internal to $A$, and thus at most $4\alpha k$ edges are incident to $C$. All the edges incident to $C$ are not internal to its core, which is empty, so there are at most $4\alpha k$ such edges. This completes the proof of the lemma.
\end{proof}


\begin{mylemma}\label{lem:trim-shave}
In each iteration of the outer loop $\bar m_2 + \bar m_4 \leq 8\alpha(\bar m_1 + \bar m_3)$.
\end{mylemma}
\begin{proof}
Consider an iteration of the outer loop, note that the outer loop first performs an intermixed sequence of cutting and trimming operations, followed by a sequence of shaving and scraping operations, each processing one cluster. Note further that
whether a node is shaved or not does {\em not} depend on its degree in $H$, but only on its degree in ${\GG}$ and on the
partition of nodes into clusters and non-cluster nodes. It also does not depend on the other scraping operations. Thus in the analysis we can assume that all shaving operations happen {\em before} all the 
scraping operations.

We use an amortization argument. Let the {\em lost degree} $dl(v)$ of a node $v$ in $H$ be $d_{\GG}(v) - d_{\tilde{H}}(v)$, and we look at the total lost degree of the nodes remaining in $H$, which is denoted as $dl(H)$. We start with $dl(H)=0$ when $H=\GG$, when we remove edges from $H$, $dl(H)$ will increase, but when we remove nodes from $H$, $dl(H)$ will decrease (as we no longer count the lost degree of those removed nodes). For every Type-$1$ edge that is removed, $dl(H)$ increases by $1$, since the degree of the endpoint other than the passive supervertex decreases by $1$ (the passive supervertex won't be in $H$, so its degree is not counted in the total lost degree). The removal of every Type-$3$ edge increases $dl(H)$ by $2$. Now if we trim a node $v$, $dl(v)$ is at least $3d_{\GG}(v)/5$, which will be subtracted from $dl(H)$ as $v$ is no longer in $H$, but as we remove the remaining at most $2d_{\GG}(v)/5$ edges of $v$, we increase $dl(H)$ by $2d_{\GG}(v)/5$, and in total, the trimming of a vertex $v$ reduces $dl(H)$ by at least $d_{\GG}(v)/5$. Trimming a vertex $v$ removes at most $2d_{\GG}(v)/5$ Type-$2$ edges, and $dl(H)$ decreases by at least $d_{\GG}(v)/5$. Thus, the number of Type-$2$ edges is at most twice the decrease in $dl(H)$. 

From our discussion above, we see that after all the cutting and trimming, i.e., the removal of all Type-$1$,$2$,$3$ edges, if the total lost degree of nodes in $H$ is $dl(H)$, then we have
\begin{equation}
\label{eqn:trimmed}
0\leq \bar m_2 \leq 2(\bar m_1 + 2\bar m_3 - dl(H))  .
\end{equation}
Now we know that before shaving and scrapping, the total lost degree of nodes remaining in $H$ is $dl(H)$, which is exactly the sum of $\partial_{\GG}(C)$ over all the cluster components $C$ in $H$, since all that remain in $H$ are the cluster components. We can then apply Lemma~\ref{lem:core2} to each cluster component, and conclude that in total at most $4\alpha dl(H)$ edges in $\GG$ that are incident to nodes of the cluster components in $H$, but not internal to the final cores. Among these at most $4\alpha dl(H)$ edges, $dl(H)$ are already removed before shaving and scrapping by the definition of lost degree, so the total number of edges shaved or scraped (i.e. $\bar m_4$) is at most $(4\alpha -1 )dl(H)$. Together with our earlier bound on $\bar m_2$ (Eqn.~\eqref{eqn:trimmed}), we get
\begin{align*}
\bar m_2 + \bar m_4 \leq &2(\bar m_1 + 2\bar m_3 - dl(H))+(4\alpha -1 )dl(H)\\
\leq  & 4(\bar m_1+\bar m_3) + (4\alpha -3)dl(H)\\
\leq & 4(\bar m_1+\bar m_3) + (4\alpha -3)\cdot 2(\bar m_1+\bar m_3)\\
\leq & 8\alpha (\bar m_1+\bar m_3)
\end{align*}
where the second to last inequality is because $dl(H)\leq \bar m_1 + 2\bar m_3$ from Eqn.~\eqref{eqn:trimmed}.
\end{proof}

\begin{mylemma}
\label{lem:cut-edge}
For each but the last iteration of the outer loop it holds that
$\bar m_1 + \bar m_3 \le \frac{\bar m}{10\alpha}$.
\end{mylemma}
\begin{proof}
Note all the argument in this proof are treating the graph as undirected graph. As this iteration is not the last iteration of the outer loop, the termination condition is not met at the end of the previous iteration, which suggests that at most $\bar m/(20\alpha)$ edges are incident to passive supervertices at the beginning of this iteration, i.e., $\bar m_1 \le \bar m/(20\alpha)$. Thus, it remains to bound $\bar m_3$, i.e., the number of edges removed on the low conductance cuts found during the inner procedure (Theorem~\ref{thm:middle-main}).

Whenever we cut a connected component $C$ of $\tilde{H}$ into $A$ and $C\setminus A$, with $A$ being the smaller side, we know $\Phi_C(A)\leq \frac{(\log m_C-\ceil*{\log \vol_C(A)})}{20\alpha\log m_G}$. Thus we charge $\frac{(\log m_C-\ceil*{\log \vol_C(A)})}{20\alpha\log m_G}$ to each edge (in $\tilde{H}$) incident to $A$ to account for the Type-$3$ edges removed across the cut. 
Let $A_1, \dots, A_l$ be the smaller side of the cut to which an arbitrary edge $e_0$ belongs whenever it is charged, and define $A_0$ to be the graph $\tilde{H}$ right before the first iteration of the middle loop (i.e., after removing the passive supervertices and trimming). That is, for every $i=1,\ldots, l$, we have $A_{i-1}$ as the connected component $C$ being chosen for an iteration of the middle loop, and we find a low conductance cut, with $A_{i}$ being the smaller side, and the edge $e_0$ falls internal to $A_{i}$. Thus, the total charge that $e_0$ receives during all executions of the middle loop in this iteration of the outer loop is
$$\sum_{1 \le i \le l} \frac{\ceil*{(\log m_{A_{i-1}}}-\ceil*{\log \vol_{A_{i-1}}(A_i)})}{20\alpha\log m_G} \le 
\frac{\log m_{A_0}}{20\alpha \log m_G} \le
\frac{1}{20\alpha}.$$
This shows that there are at most $\bar m/(20\alpha)$ type-3 edges. Thus, $\bar m_1 + \bar m_3 \le \bar m/(10\alpha)$.
\end{proof}

Putting these two lemmata together we conclude that the number of edges in $\GG$ is reduced by a constant
factor in each iteration of the outer loop.

\begin{mylemma}
\label{lemma:outer-halved}
In each except for the last iteration of the outer loop, i.e. the repeat loop, the number of
edges in the graph $\GG$ is decreased by a factor of at least $0.9$. 
\end{mylemma}
\begin{proof}
Lemma~\ref{lem:trim-shave} showed that  $\bar m_2 + \bar m_4 \leq 8\alpha(\bar m_1 + \bar m_3)$.
Lemma~\ref{lem:cut-edge} showed that  $\bar m_1  + \bar m_3 \leq \bar m/(10\alpha)$.
Thus it follow that $\bar m'= \bar m_1 + \bar m_2 + \bar m_3 + \bar m_4 \le 9 \bar m /10$.
\end{proof}

The above lemma gives the termination of the outer loop, and completes the analysis of the K-T framework. We summarize the results of this subsection in the following theorem.

\begin{theorem}
Given a simple $\alpha$-balanced directed graph $G=(V,E)$ such that $\forall v\in V: \ind(v),\outd(v)\geq \delta$, we can compute a multi-graph $\GG =(\bar{V}, \bar{E})$ where $\GG$ has $O(m_G\alpha^4\ln m_G / \delta)$ edges, and all non-trivial (directed) minimum cuts of $G$ are preserved in $\GG$.
\end{theorem}

\vspace{-0.1in}
\section{The inner procedure}
\label{sxn:inner-procedure}
We start with a high level descriptions of the inner procedure, followed by the detailed algorithm and analysis in Section~\ref{app:inner-procedure}. We follow a similar approach as~\cite{KT18}, but use the flow methods in Section~\ref{sxn:flow} instead of diffusions as subroutines. For the inner procedure, the graph we consider will always be the undirected version. Moreover, as the inner procedure is given a connected component $C$ of $\tilde{H}$, within the scope of this section, we can consider the induced sub-graph on $C$ as our graph.
As discussed in Section~\ref{sxn:outerloop}, the K-T framework relies on the inner procedure to achieve the following.
\begin{theorem}
\label{thm:middle-main}
Given an $s$-splittable trimmed component $C$ with $m_C\geq s\geq s_0=\ccc\alpha^2\gamma\delta$, the inner procedure will achieve one of the following:
\begin{enumerate}
\item Find a set $A\subset C$ with $\vol_C(A)\leq m_C$, and 
\[
\Phi_C(A)\leq \frac{(\log m_C+1-\ceil*{\log \vol_C(A)})}{20\alpha\log m_G}
\]
in time
$O(\alpha^2\vol_C(A)\ln \frac{m_C}{\vol_C(A)}\ln m_G \ln\ln^2 m_G)$.
\item Find a set $A\subset C$ (and $\bar{A}=C\setminus A$) with 
\[
\Phi_C(A)\leq \frac{(\log m_C+1-\ceil*{\min(\log \vol_C(A),\log\vol_C(\bar{A}))})}{20\alpha\log m_G}
\]
in time $O(\alpha^2 m_C\ln m_G\ln\ln m_G)$. Moreover, $\vol_C(A)$ is $\Theta(m_C)$, and $A$ is certified to be $\cc s$-splittable.
\item Certify that $C$ is $\cc s$-splittable in time $O(\alpha^2 m_C\ln m_G\ln\ln m_G)$.
\end{enumerate}
\end{theorem}
The intuition is as follows. If $C$ is a connected component of $H$ that is not a cluster, by definition there must exist cuts in $C$ of cut-size at most $\alpha\delta$. Consider any such small cut
and denote by $S$ the side with minimum volume.
We know $\vol_C(S)\leq s$, since $C$ is $s$-splittable. The cut-size being at most $\alpha \delta$ gives a strong bottleneck to route into or out of $S$, and we can exploit this bottleneck.\\

\smallskip
\noindent{\bf Use the cut as a bottleneck to route supply out of $S$.}
The excess-scaling algorithm (Algorithm~\ref{alg:scaling-flow}) guarantees to find a low conductance cut in local runtime if we give it a very infeasible flow problem, i.e.~one where it is impossible to route a large fraction of the source supply to sinks (Lemma~\ref{lemma:flow-alg}). 
A small cut $S$ naturally gives such  very infeasible flow problems as follows. As the total sink capacity in $S$ is $\vol_C(S)$, the condition $\vol_C(S)\leq s$ 
bounds the total sink capacity in $S$ by at most $s$. As there are at most $\alpha \delta$ edges on the cut, we can pick an appropriate edge capacity parameter to get a good bound on the total cut capacity of $S$. As long as we choose a source function such that the source supply in $S$ is large, for example twice the sum of $S$'s sink and cut capacity, we get a very infeasible flow problem. 
The difficulty, however,  is  to construct such a source function {\em without knowing $S$}. The strategy, very informally, is as follows. We construct a large number ($\leq 5000\alpha$) of flow problems with different source functions, and run (in parallel, step by step) Algorithm~\ref{alg:scaling-flow} on them,
terminating them whenever one of them returns a low conductance cut, or, if this does not happen, letting them all run to termination. If any of these flow problems had large enough source supply in $S$, we  get a low conductance cut in local runtime, i.e., case $(1)$ in Theorem~\ref{thm:middle-main}. \\

\smallskip
\noindent{\bf Use the cut as a bottleneck to route supply into $S$.}
If we do not get the above case, we end with case $(1)$ of
Lemma~\ref{lemma:flow-alg} for all the flow problems we constructed,
and we have spent ${O}(\alpha^2 m_C \ln m_G\ln\ln m_G)$ time for them. In this case, we
know that the source functions of these flow problems all have little
source supply starting in $S$ as they were able to route most of their
flow to a sink. Using the $\Delta(v)'$ values returned by each
execution of Algorithm~\ref{alg:scaling-flow}, 
we suitably combine the successfully routed source supplies to 
a new, well spread-out source function. 
More specifically, the new source function
fulfills the following properties:
(a) Very little source supply is in $S$, and the cut bounds the amount of supply that can be pushed into $S$, so the total supply ending in $S$ must be small. (b) The amount of total supply is large (more formally at least $4m_C$) and well spread out (more formally $\forall v:\Delta(v)\leq 25d(v)$).
Thus we can run a  {\em Unit-Flow} computation directly on it ({\em without} going through the excess scaling procedure).  
We use $h = \Theta(\alpha \ln m_G \ln \ln m_G)$, $U=s/(20\alpha \delta)$, and $w = 25$, and have either of the two outcomes below.

(A) All nodes in $C$ have their sinks saturated (case $(2)$ of Theorem~\ref{thm:unit-flow}). Since the amount of supply ending in $S$ is small, the total sink capacity in $S$ must also be small, i.e. $\vol_C(S)$ must be small. Recall $S$ is any cut in $C$ with cut-size at most $\alpha \delta$. Thus we know all such cuts have small volume, more specifically at most $\cc s$, implying that $C$ is $\cc s$-splittable, i.e. case $(3)$ of Theorem~\ref{thm:middle-main}.

(B) We get a set $A$ as specified in case $(3b)$ of Theorem~\ref{thm:unit-flow}. Since all nodes in $A$ have their sinks saturated, by a similar argument as above, we show that $\vol_C(A\cap S)$ is small. Again as this argument works for any $S$ of cut-size at most $\alpha\delta$, we can argue $A$ is $\cc s$-splittable.
Additionally, case (3b) of Theorem~\ref{thm:unit-flow} and Observation~\ref{obs:largevol} give the desired bound on
 $\Phi_C(A)$ and show that $\vol_C(A) = \Theta(m_C)$, which shows that case $(2)$ of Theorem~\ref{thm:middle-main} holds.

Note that the outline of the flow problem construction is similar
to the seeding of diffusions in \cite{KT18}, but the details differ in
part due to their ability to use the linearity property of diffusions.
We must explicitly spread out our flows and warm start our procedures
in some cases as noted above.

\subsection{Details and analysis of the inner procedure}
\label{app:inner-procedure}
Now we formalize the intuition we outlined above, and give the detailed implementation and analysis of the inner procedure. 

Recall when the inner procedure is invoked, we have a connected component $C$ of $\tilde{H}$ such that $C$ is certified to be $s$-splittable for some $s\in [s_0,m_C]$. Through the rest of the section, we work completely inside $C$ (the induced sub-graph of $C$ as an undirected graph), and the volume, degree, cut-size are all defined with $C$ being our graph when we omit the subscript. 

Recall from Section~\ref{sxn:outerloop} the K-T framework removes passive supervertices from $\tilde{H}$, and keeps $\tilde{H}$ trimmed through the algorithm. Thus in the connected component $C$, any regular vertex $v$ has $d(v)\geq \frac{2}{5}2\delta$, and any supervertex has degree at least $\frac{2}{5}c_1\alpha\delta\gamma$ where $\gamma=\Theta(\alpha\ln m)$ is the parameter in the definition of a passive supervertex. Moreover, no two vertices in $C$ have more than $\alpha \delta$ parallel edges between them, since such pair of vertices would have been contracted at the end of the previous iteration of the outer loop, and any two regular vertices in $C$ have at most $2$ parallel edges between them, since the original graph $G$ is a simple directed graph. 

As discussed in Section~\ref{sxn:outerloop}, we need to prove Theorem~\ref{thm:middle-main}, and we will follow the intuitions outlined at the start of Section~\ref{sxn:inner-procedure}.

As suggested earlier, we will construct various flow problems aiming to exploit the existence of a cut $S$ of cut-size at most $\alpha\delta$, and use the flow-based algorithms from Section~\ref{sxn:flow} on the constructed problems. The edge capacity parameter is crucial if we want to use the cut as a bottleneck to route supply out of or into $S$. As specified in Lemma~\ref{lemma:flow-alg}, when given parameter $U$, the actual edge capacities used by the algorithm is $UF$ where $F=\max_v\frac{\Delta(v)}{d(v)}$ is the largest ratio between a node's initial supply and its degree, thus it is important for us to construct flow problems where the source function $\Delta$ has small $F$. Formally, our strategy to construct source function with small $F$ is captured in the following definitions.
\begin{definition}
An {\em edge-bundle} is a set of edges sharing a common endpoint. We denote an edge-bundle by $(v,X(v))$, where $v$ is the common endpoint that we call the {\em center} of the edge-bundle, and $X(v)$ is the multiset (as there are parallel edges) containing the other endpoints of the edges. A set of edge-bundles are disjoint if their underlying sets of edges are edge disjoint. 
\end{definition}
Note that in a set of disjoint edge-bundles, a vertex $v$ can still be the center of multiple edge-bundles, and we can also have parallel edges, the definition simply prevents the same edge from appearing in multiple edge-bundles.
\begin{definition}
\label{def:expansion-graph}
Given a set $Y$ of edge-bundles in $C=(V,E)$, the {\em expansion graph} associated with $Y$ is the directed multigraph $G_Y=(V,E_Y)$, such that $E_Y$ has a directed edge $(v,u)$ for each $u\in X(v)$ and each $(v,X(v))\in Y$. Namely, $E_Y$ is the union of all edge-bundles in $Y$, with edges oriented away from the centers of the edge-bundles.
\end{definition}
\begin{definition}
\label{def:sparse-set}
A set of edge-bundles $Y$ in $C$ is {\em $(\beta,Z)$-sparse} if
\begin{itemize}
\item The edge-bundles in $Y$ are edge disjoint.
\item Each edge-bundle $(v,X(v))\in Y$ has at least $Z$ edges.
\item For each vertex $v$, its in-degree in the associated expansion graph $G_Y$ is at most $\frac{1}{\beta}$ of its degree in $C$.
\end{itemize}
\end{definition}
Note if $Y$ is $(\beta,Z)$-sparse, then any subset of $Y$ is also $(\beta,Z)$-sparse.

Edge-bundles will be used to construct source functions for flow problems, and the motivation of $(\beta,Z)$-sparse set of edge-bundles is that if we put supply on the centers of the edge-bundles in the set, and push out uniformly using the edges in the edge-bundles, the amount of supply received by any node will not be too large comparing to its degree. 

 More precisely, we call an {\em initial spread-out} of $\sigma$ supply over the edge-bundle $(v,X(v))$ as the operation of starting with $\sigma$ supply on $v$, and pushing $\frac{\sigma}{|X(v)|}$ supply along each edge in the edge-bundle to vertices in $X(v)$. Formally, given edge-bundle $(v,X(v))$ and $\sigma$, we define 
\[
\Delta_{(v,X(v)),\sigma}(u)=\frac{\sigma\cdot \#(u,X(v))}{|X(v)|}
\]
where $\#(u,X(v))$ is the number of times $u$ appears in $X(v)$. That is, $\Delta_{(v,X(v)),\sigma}(u)$ is the supply ending at $u$ if we start with $\sigma$ supply at the center $v$ of the edge-bundle $(v,X(v))$, and then push out all the $\sigma$ supply at $v$ evenly along the edges in the edge-bundle. We extend the definition to a set $Y$ of edge-bundles:
\[
\Delta_{Y,\sigma}(u)=\sum_{(v,X(v))\in Y}\Delta_{(v,X(v)),\sigma}(u)
\]
i.e. $\Delta_{Y,\sigma}(u)$ is the amount of supply ending at $u$, if we carry out simultaneously a initial spread-out of $\sigma$ supply over each edge-bundle in $Y$. It is clear that the total amount of supply is $|\Delta_{Y,\sigma}(\cdot)|=|Y|\sigma$, where $|Y|$ is the number of edge-bundles in $Y$.

We will use the supply on vertices arising from initial spread-outs as the source function, and we consider flow problems defined below.
\begin{definition}
\label{def:source-flow}
Given $\Delta:V\rightarrow \mathbb{Z}_{\geq 0}$ and $\kappa$, we define a flow problem, {\em Flow-Problem}$(\Delta,\kappa)$, as follows. The source function is given by $\Delta(\cdot)$, all edges have capacity $\kappa$, and each vertex $v$ is a sink of capacity $d(v)$.
\end{definition}
Essentially we are taking a two-phase approach to spread supply from edge-bundle centers to the entire graph. The first phase being the initial spread-outs, where we have full control of the behavior, and the second phase being the flow routing, so we can still take advantage of the better conductance property of flow algorithms.

The flow algorithm we use in Section~\ref{sxn:unit-flow} and Section~\ref{sxn:flow-algorithm} will also allow us to associate each unit of supply with its source vertex as specified by $\Delta(\cdot)$. When the $\Delta(\cdot)$ we use arises from initial spread-outs over edge-bundles, we can further decompose the flow to associate each unit of supply with the original edge-bundle it started at, i.e. before the initial spread-out. Thus in step~\ref{inner:s-free}, we assume we know how much of the supply originating from each edge-bundle is routed to sinks.
\noindent
\begin{algorithm}
\caption{Inner Procedure}
\label{alg:inner}
\fbox{
\parbox{0.95\textwidth}{
{\bf Input:} Trimmed component $C$ with $m_C$ edges, and $s\in [s_0,m_C]$ such that $C$ is $s$-splittable.\\
{\bf Steps:}
\begin{enumerate}[leftmargin=*]
\item
\label{inner:start}
Choose a set $Y$ of $\frac{5000\alpha m_C}{s}$ edge-bundles that is $(\alpha \gamma,\frac{\delta}{10})$-sparse, and split $Y$ into sets $Y_1,\ldots,Y_{5000\alpha}$, each with $\frac{m_C}{s}$ edge-bundles.
\item
\label{inner:cut}
In parallel (step by step) for all $i=1,\ldots,5000\alpha$, solve Flow-Problem$(\Delta_{Y_i,2s},\frac{s}{1000\alpha \delta} )$ using Algorithm~\ref{alg:scaling-flow} in Section~\ref{sxn:flow-algorithm}, with inputs graph $C$, source function $\Delta_{Y_i,2s}$, $\tau=0.1$, $U=100\alpha\ln m_G$, and $h=1000\alpha\ln m_G\ln\ln m_G$. Terminate all problems if the Flow-Problem of any $i$ terminates with a cut $A$ as in case $(2)$ of Lemma~\ref{lemma:flow-alg}, stop the inner procedure with $A$.
\item
\label{inner:s-free}
Otherwise, the Flow-Problems for all $i$ end with case $(1)$ of Lemma~\ref{lemma:flow-alg}, i.e. with at least $1.8m$ supply routed to sinks. For each Flow-Problem $i$, use the returned preflow $f_i$ to find a subset $X_i\subseteq Y_i$ such that each edge-bundle in $X_i$ has at least $1.6s$ of its $2s$ initial supply routed to sinks. 
\item
\label{inner:process}
For each $i$, compute $g_i(\cdot)$ from $f_i(\cdot)$ as follows: First remove from each $f_i(v)$ the excess supply on vertices (i.e. $\max(f_i(v)-d(v),0)$ supply on $v$), as well as the supply not originating from edge-bundles in $X_i$. Then scale the remaining supply at every vertex by $\frac{1}{200\alpha}$. 
\item
\label{inner:certify}
Let $\Delta_X(\cdot)\myeq \sum_i g_i(\cdot)$. Run {\em Unit-Flow} in Section~\ref{sxn:unit-flow} with inputs $G=C$, source function $\Delta_X$, $U=\frac{s}{20\alpha\delta}$, $h=1000\alpha\ln m_G\ln\ln m_G$, and $w=25$. If the returned preflow routes at least $d(v)$ supply to every vertex $v$, stop and output that $C$ is $\cc s$-splittable. Otherwise, stop with the set $A$ returned by {\em Unit-Flow}, and output that $A$ is $\cc s$-splittable.
\end{enumerate}
}
}
\end{algorithm}
We use the following definition to formally specify whether an edge-bundle is ``inside'' or ``outside'' a small cut.
\begin{definition}
\label{def:s-capture}
An edge-bundle $(v,X(v))$ is {\em $s$-captured} if there exists a cut $S$ in $C$ such that $\partial(S)\leq \alpha\delta$, $s_0\leq \vol(S) \leq s$, and $|X(v)\cap S|\geq \frac{3}{4} |X(v)|$, i.e. at least $\frac{3}{4}$ of the edges are between $v$ and vertices in $S$. We say the edge-bundle is $s$-captured by $S$. (Note that $v$ might or might not belong to $S$.) A {\em $s$-free} edge-bundle is one that is not $s$-captured.
\end{definition}

In Step~\ref{inner:start} of the inner procedure, we pick a large set $Y$ of edge-bundles that is $(\alpha\gamma,\frac{\delta}{10})$-sparse. This step is valid as we have the following lemma.
\begin{restatable}{mylemma}{sparseset}
For $s_0 \ge 1000 \alpha^2 \gamma \delta$, a trimmed component $C=(V,E)$ with $m=|E|$, and $m \geq s\geq s_0$, we can construct a set of $\frac{5000\alpha m}{s}$ edge-bundles that is $(\alpha \gamma,\frac{\delta}{10})$-sparse. The construction takes $O(\frac{m\alpha^2 \delta\gamma}{s})$.
\end{restatable}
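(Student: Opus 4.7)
My plan is to construct the bundles via a greedy sweep. Initialize an in-degree counter $c(u) := 0$ for every vertex $u \in V$, and process the vertices of $C$ in arbitrary order. When processing a vertex $v$, I repeatedly try to form a bundle centered at $v$ by picking $\lceil \delta/10 \rceil$ unused incident edges $\{v, u\}$ whose other endpoints $u$ currently satisfy $c(u) < d(u)/\gamma$, marking those edges as used and incrementing the corresponding $c(u)$ by one. I move on to the next vertex once fewer than $\lceil\delta/10\rceil$ such edges can be assembled at $v$, and halt globally once $5000 m/s$ bundles have been produced.

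The three conditions of $(\gamma, \delta/10)$-sparsity are immediate from the construction: bundles are edge-disjoint because each used edge is marked, each bundle has at least $\lceil\delta/10\rceil \ge \delta/10$ edges by the selection rule, and $c(v) \le d(v)/\gamma$ is preserved because the strict inequality is checked before each increment. So the only nontrivial step is to show that the greedy always reaches the target of $5000 m/s$ bundles before exhausting the vertex set.

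I argue this by contradiction through a global double count. Suppose the greedy halts with $T < 5000 m/s$ bundles. Then for every vertex $v$, the available count at the time processing of $v$ finished — unused incident edges whose other endpoint was not in the ``full set'' $F := \{u : c(u) \ge d(u)/\gamma\}$ — was strictly less than $\lceil\delta/10\rceil$, hence at most $\lceil\delta/10\rceil - 1$; since $F$ only grows over time this bound persists to termination. Partition the edges at each $v$ into four pieces: (i) edges in a bundle centered at $v$, contributing $b(v)\lceil\delta/10\rceil$; (ii) edges in which $v$ is a non-center, contributing $c(v)$; (iii) unused edges to $F$; and (iv) unused edges to $V\setminus F$. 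Summing over $v \in V$, using $\sum_v b(v)\lceil\delta/10\rceil = \sum_v c(v) = U$ for the total edges $U$ used in bundles, the key bound $\sum_v (\text{unused to }F)(v) \le \sum_{u \in F} d(u) \le \gamma \sum_{u \in F} c(u) \le \gamma U$, and $\sum_v (\text{unused to }V\setminus F)(v) \le |V|(\lceil\delta/10\rceil - 1)$, I obtain the inequality
\[
2m \;\le\; (2 + \gamma)\,U + |V|\bigl(\lceil\delta/10\rceil - 1\bigr).
\]
The trimming consequence $|V| \le 10 m/\delta$ (from $d_C(v) \ge 2\delta/5$ for regular vertices and $d_C(v) \ge \tfrac{2}{5} c_1 \gamma \delta$ for active supervertices) bounds the second term by $m$, while plugging in $U < 5000 m \lceil\delta/10\rceil / s$ together with $s \ge 1000\gamma\delta$ makes $(2+\gamma)U$ strictly less than $m$ once $\gamma$ exceeds a small absolute constant; summing shows the right-hand side is strictly less than $2m$, a contradiction. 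Since $\gamma = \Theta(\ln m_G)$ and $\delta = \Omega(\ln m_G)$, the required constant slack is present for all but trivially small inputs.

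I expect the main obstacle to be calibrating the constants in the final inequality precisely at the boundary $s = s_0 = 1000\gamma\delta$: the $\gamma$ factor in $s_0$ just barely suppresses the ``unused to $F$'' contribution $\gamma U$ to free up the remaining $m$ worth of budget for the $|V|(\lceil\delta/10\rceil - 1)$ term, so any looser estimate would fail to close the gap. For the running time, each edge ever placed in a bundle is examined $O(\gamma)$ times before being either used or shelved when its other endpoint joins $F$; since the total number of used edges is $O(m \delta / s)$, this charging scheme — implementable with adjacency-list storage and lazy removal of edges incident to newly full vertices — yields the claimed $O(m \delta \gamma / s)$ total work.
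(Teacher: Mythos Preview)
Your correctness argument is sound and takes a genuinely different route from the paper. The paper's construction is operational: it maintains a shrinking ``live'' subgraph $C'$, always centers the next bundle at a live supervertex if one exists, and after each bundle explicitly removes $\gamma-1$ extra incident edges at every non-center endpoint together with all vertices that consequently become dead; a token argument shows that each bundle depletes at most $2Z\gamma$ edges from $C'$, so at least $m/(2Z\gamma)\ge 5m/(\delta\gamma)\ge 5000m/s$ bundles can be formed before $C'$ empties. Your version replaces all of this bookkeeping with a terminal global double-count on the four-way edge partition, which is cleaner in that it avoids the token argument for the cascading dead-vertex removals. Both arguments ultimately pivot on the same inequality $\sum_{u\in F} d(u)\le \gamma U$ relating the volume of full vertices to the total bundle size, and your constants do close with room to spare at $s=s_0$ (in fact the right-hand side of your displayed inequality is roughly $m/2+m/2<2m$ once $\gamma,\delta$ are moderately large).

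Your running-time justification, however, has a gap. The phrase ``each edge ever placed in a bundle is examined $O(\gamma)$ times'' does not describe your algorithm: a bundle edge is examined once. The work you must bound is (i) scans of edges whose other endpoint lies in $F$, and (ii) the up to $\lceil\delta/10\rceil-1$ leftover scans each time you abandon a vertex. Item (i) is indeed $O(\gamma U)$ via $\sum_{u\in F}d(u)\le\gamma U$, but you never bound the number of vertices visited, so (ii) is uncontrolled; as written the sweep could touch $\Theta(|V|)$ vertices and incur $\Theta(m)$ work, not $O(m\delta\gamma/s)$. The paper sidesteps this entirely because its explicit removal step makes the per-bundle work exactly $\Theta(\gamma\delta)$, so the bound is immediate. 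To close the gap in your framework, observe that any fully processed vertex $v$ yielding zero bundles has at most $c(v)\le d(v)/\gamma+1$ edges already consumed as non-center and fewer than $\lceil\delta/10\rceil$ usable edges, hence at least $d(v)(1-1/\gamma)-\lceil\delta/10\rceil-1=\Omega(\delta)$ of its edges go into the current $F$; since the total number of edges touching $F$ is at most $\gamma U$, there are $O(\gamma U/\delta)=O(\gamma T)$ such vertices, and the leftover work is $O(\gamma T\cdot\delta)=O(\gamma U)=O(m\delta\gamma/s)$ as desired.
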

\begin{proof}
Let $Z=\frac{\delta}{10}$, in our construction, we say a supervertex is {\em live} if it has at least $\alpha \gamma Z$ edges to live neighbors, and a regular vertex is {\em live} if it has at least $Z$ edges to live neighbors. Vertices are {\em dead} if not {\em live}. We will implicitly consider a graph $C'$ on live vertices. As $C$ being a trimmed component, we know at the start all regular vertices have degree at least $8Z$, and all supervertices have degree at least $8\alpha^2 \gamma Z$ (as passive supervertices are removed), so we can make all vertices live at the start, and $C'=C$. As $\delta\gg \alpha \gamma$ in our setting, for simplicity we assume integrality of $\frac{d_{C'}(v)}{\alpha \gamma}$ for any live vertex $v$, as $d_{C'}(v)\geq Z=\frac{\delta}{10}$. 

Now we describe how we construct the set of edge-bundles $Y$ with the following procedure.\\
\noindent
\fbox{
\parbox{0.95\textwidth}{
{\bf Construction of edge-bundles}
\begin{enumerate}[leftmargin=*]
\item Choose an arbitrary live supervertex, if no live supervertex exists, choose a live regular vertex. Call the chosen vertex $v$.
\label{proc:one}
\item Construct an edge-bundle centered at $v$ by picking $Z$ incident edges of $v$ in $C'$, subject to the constraint that for each live neighbor $u$ of $v$, we pick at most $\min(d_{C'}(v,u),\frac{d_{C'}(u)}{\alpha \gamma})$ parallel edges between $u$ and $v$, where $d_{C'}(v,u)$ is the number of edges between $v$ and $u$ in $C'$. Add the edge-bundle to $Y$.
\label{proc:add}
\item Remove edges from $C'$ as follows
\begin{enumerate}[label={(\alph*)}]
\item For each edge $\{v,u\}$ added to the edge-bundle above, remove that edge and an additional $\alpha\gamma-1$ incident edges of $u$ from $C'$.
\item Recursively remove from $C'$ the dead vertices and all their incident edges.
\end{enumerate}
\label{proc:remove}
\item
Repeat the process from Step~\ref{proc:one} until we have $\frac{5000\alpha m}{s}$ edge-bundles in $Y$.
\end{enumerate}
}
}

First we show that Step~\ref{proc:add} is always feasible, i.e. we can obtain such an edge-bundle with a live vertex $v$. As all vertices in $C'$ are live, if $v$ is a supervertex, the number of edges we can pick is 
\[
\sum_u\min(d_{C'}(v,u),\frac{d_{C'}(u)}{\alpha \gamma})\geq \sum_u\frac{d_{C'}(v,u)}{\alpha \gamma}\geq \frac{d_{C'}(v)}{\alpha \gamma}\geq Z
\]
If $v$ is a regular vertex, $C'$ must have no supervertex at that point, so there are at most two parallel edges between $v,u$ in $C'$. In this case, we can add any incident edge $(v,u)$ to the edge-bundle of $v$ as $d_{C'}(u)/(\alpha\gamma)\geq Z/(\alpha\gamma)\gg 2$, and $v$ has at least $Z$ incident edges in $C'$.

The condition we enforce in Step~\ref{proc:add} guarantees that we can always carry out Step~\ref{proc:remove}$(a)$, i.e. there will be enough edges to remove. If we have added $k$ edge-bundles to $Y$, the total number of edges we removed in Step~\ref{proc:remove}$(a)$ is $k Z\alpha \gamma$. To bound the number of edges removed in Step~\ref{proc:remove}$(b)$,
assume that every removal in Step~\ref{proc:add}  and Step~\ref{proc:remove}(a) places one token on the other endpoint of the removed edge. Thus, a total of $k Z\alpha \gamma$ tokens are placed on nodes.
We will show that we can consume one token on each edge removed during Step~\ref{proc:remove}(b), which bounds the number of edges removed in Step~\ref{proc:remove}(b) by $k Z \alpha \gamma$ (i.e., the total number of tokens created by \textbf{Step~\ref{proc:remove}(a)}) .
Whenever a dead vertex $v$ is removed, it consumes one token on each removed adjacent edge and it moves one token to the other endpoints of this edge. 
It remains to show that $v$ has a sufficient number of tokens to do so. We show this by proving a more general claim by induction: at each point in time, the number of tokens placed on a vertex corresponds to the number of edges the vertex has lost. 
This guarantees that each dead vertex we remove has enough tokens to give to its removed edges and its neighbors, since such a vertex has lost at least $3/4$-th of its adjacent edges.
The claim certainly holds before and when the first dead vertex is removed as  it received a token for all its previously removed edges. Next consider the removal of the $i$-th dead vertex $v$ and assume by induction that right before the removal $v$ has at least $3d_C(v)/4$ many tokens. As $v$ has lost at least $3/4$-th of its edges, the removal of $v$ 
removes at most $d_C(v)/4$ many edges. We use $d_C(v)/4$ many of $v$'s tokens and give them to the removed edges and another $d_C(v)/4$ many tokens and give them to the other endpoints of the removed edges. Thus, the induction invariant also holds after the removal of the $i$-th dead vertex.
This bounds the total number of edge removed in Step~\ref{proc:remove}$(b)$ by the total number of edges removed in Step ~\ref{proc:add} and~\ref{proc:remove}$(a)$, and thus after adding $k$ edge-bundles to $Y$, we have removed at most $2kZ\alpha \gamma$ edges from $C$ to get the current $C'$.

As long as $C'$ is not empty, it guarantees a live vertex, and thus an edge-bundle to add. We showed that the total number of edges removed from $C'$ after constructing $k$ edge-bundles is $2kZ\alpha \gamma$, thus as long as $2kZ\alpha \gamma\leq m$, we must have edges remaining in $C'$. This implies that we can have at least $\frac{m}{2Z \alpha \gamma}\geq \frac{5m}{\alpha \delta\gamma}$ edge-bundles, so as long as $s\geq s_0\geq 1000\alpha^2 \delta\gamma$, we can have $\frac{5000\alpha m}{s}$ edge-bundles.

The set of edge-bundles is clearly $(\alpha \gamma,\frac{\delta}{10})$-sparse by our construction as for each edge that we added to an edge-bundle, and that will become an in-edge for a vertex $v$ we removed $\alpha \gamma-1$ edges incident to $v$. As to the runtime, since we implicitly keep $C'$, the work is linear in the total number of edges removed from $C$, which is $2Z\alpha \gamma$ per edge-bundle, thus $O(\frac{m\alpha^2 \delta\gamma}{s})$ in total.
\end{proof}
First we show that if the procedure terminates at Step~\ref{inner:cut}, we get a cut as specified in case $(1)$ of Theorem~\ref{thm:middle-main}.
\begin{mylemma}
\label{lemma:inner-cut}
If the inner procedure stops at Step~\ref{inner:cut}, we have a set $A$ such that $\vol_C(A)\leq m_C$, and $\Phi(A)\leq \frac{(\log m_C + 1 -\ceil*{\log \vol_C(A)})}{20\alpha \log m_G}$. The running time in this case is $O(\vol(A)\alpha^2 \ln\frac{m_C}{\vol_C(A)}\ln m_G\ln\ln^2 m_G)$. 
\end{mylemma}
\begin{proof}
If at Step~\ref{inner:cut} the excess scaling flow algorithm terminates with case $(2)$ of Lemma~\ref{lemma:flow-alg} for any of the $5000\alpha$ flow problems, let $A$ be the smaller side of the cut returned. We know $A$ has the desired conductance, as $|\Delta_{Y_i,2s}(\cdot)| = \frac{m_C}{s}\cdot 2s = 2m_C$, $h=1000\alpha \ln m_G\ln\ln m_G$ and $U=100\alpha \ln m_G$. As to the runtime, since we run all $O(\alpha)$ flow problems in parallel, the time we spend before we terminate with $A$ is $O(\vol_C(A)\alpha^2\ln\frac{m_C}{\vol_C(A)}\ln m_G\ln\ln^2 m_G)$ by Lemma~\ref{lemma:flow-alg}. Furthermore, Lemma~\ref{lemma:flow-alg} guarantees that $\vol_C(A)$ is $\Omega(\frac{m_C}{F\ln m_G\ln\ln m_G})$, where $F=\max_v \frac{\Delta_{Y_i,2s}(v)}{2d(v)}$. 

We now upper-bound the value of $F$ in the Flow-Problems associated with the $Y_i$'s. Since each $Y_i$ is $(\alpha\gamma,\frac{\delta}{10})$-sparse, we know $\Delta_{Y_i,2s}(v)\leq \frac{2s}{\delta/10}\frac{d(v)}{\alpha \gamma}$ for all $v$ by Definition~\ref{def:sparse-set} and the construction of $\Delta_{Y_i,2s}$ from initial spread-outs. Thus $F\leq \frac{10s}{\alpha\delta\gamma}$, which implies $\vol_C(A)$ is $\Omega(\frac{m_C\alpha\delta\gamma}{s\ln m_G\ln\ln m_G})$. To find $Y$ in the first step of the inner procedure, we spend time $O(\frac{m_C\alpha^2 \delta\gamma}{s})$, which is $O(\alpha\vol_C(A)\ln m_G\ln\ln m_G)$. Thus the total runtime is $O(\vol_C(A)\alpha^2\ln\frac{m_C}{\vol_C(A)}\ln m_G\ln\ln^2 m_G)$ if the inner procedure ends in Step~\ref{inner:cut}.
\end{proof}

Now we formalize the intuition that if the source function has large enough initial source supply trapped inside the small cut, we get a very infeasible flow problem. 
\begin{mylemma}
\label{lemma:single-source}
Given an $s$-captured edge-bundle $(v,X(v))$, we can send at most $1.6s$ supply to sinks in {\em Flow-Problem}$(\Delta_{(v,X(v)),2s},\frac{s}{1000\alpha\delta})$.
\end{mylemma}
\begin{proof}
The flow problem we consider is with source function resulting from an initial spread-out of $2s$ supply over an edge-bundle $(v,X(v))$, which is $s$-captured by some set $S$. We know at least $\frac{3}{4}$ of the edges go to vertices in $S$, so after the initial spread-out, the total source supply at vertices outside $S$ is at most $\frac{s}{2}$. As vertices in $S$ have total sink capacity $\vol(S)$, which is at most $s$, the total amount of supply that can be routed to sinks in $S$ is at most $s$. Furthermore, at most $\frac{s}{1000\alpha\delta}\alpha\delta$ supply can be pushed out of $S$, since the cut has size at most $\alpha\delta$, and edges have capacity $\frac{s}{1000\alpha \delta}$. Even if all the $\frac{s}{2}+\frac{s}{1000}\leq 0.6s$ supply not in $S$ is routed to sinks eventually, we have at most $1.6s$ supply routed to sinks in total.
\end{proof}

Given the above lemma, if the Flow-Problems associated with all $Y_i$'s successfully route most of the supply to sinks, we know many of the edge-bundles we start with are not $s$-captured.
\begin{mylemma}
\label{lemma:s-free}
If $\gamma > c_2\alpha \ln m_G$ for a suitably chosen constant $c_2$, Step \ref{inner:s-free} of inner procedure will have at least $\frac{m\alpha}{10s}$ edge-bundles in each $X_i$, and all the edge-bundles in $X_i$ are $s$-free.
\end{mylemma}
\begin{proof}
Lemma~\ref{lemma:single-source} states that if an edge-bundle is $s$-captured, after an initial spread-out of $2s$ supply over the edge-bundle, at most $1.6s$ of the $2s$ supply can be subsequently routed to sinks, as long as the flow respects the edge capacity of $\frac{s}{1000\alpha\delta}$ on each edge. By Lemma~\ref{lemma:flow-alg}, we know the edge capacity used in the excess scaling flow algorithm is $200F\alpha\ln m_G$ (since we use $U=100\alpha \ln m_G$), where by our calculation in Lemma~\ref{lemma:inner-cut} we have $F\leq \frac{10s}{\alpha\delta\gamma}$. Thus the preflow $f_i$ respects the edge capacity of $\frac{2000s\alpha\ln m_G}{\alpha \delta\gamma}$, which is less than $\frac{s}{1000\alpha\delta}$ when $\gamma>c_2\alpha\ln m_G$ for some constant $c_2$. By Lemma~\ref{lemma:single-source}, any $s$-captured edge-bundle in $Y_i$ has at most $1.6s$ of its initial supply routed to sinks, so we know all edge-bundles in $X_i$ are $s$-free. 

To bound the size of $X_i$, note that we have $2s$ supply starting with each of the $\frac{m}{s}$ edge-bundles, thus if less than $\frac{m}{10s}$ of them have more than $1.6s$ rounted to sinks, we can have at most $\frac{m}{10s}2s+\frac{9m}{10s}1.6s< 1.8m$ total supply routed to sinks. Since at least $1.8m$ supply is routed to sinks, we must have at least $\frac{m}{10s}$ edge-bundles in each $X_i$.
\end{proof}
If we get to Step~\ref{inner:s-free} of the inner procedure, we have spent $O(\alpha^2 m_C \ln m_G\ln\ln m_G)$ on all the flow problems in the earlier step, so we need to make progress by certifying that a subset of volume $\Omega(m_C)$ is $\cc s$-splittable. We now formalize the strategy we outlined in the second half of high-level discussion of Section~\ref{sxn:inner-procedure}.

Intuitively, we want to continue with the pre-flows we have from Step~\ref{inner:cut}, since we know from these pre-flows that we can spread out the supply of all edge-bundles in $X$. However, if we simply start from scratch on edge-bundles in $X$, we may end up with some small cut, because flow routing is not a linear operator. The procedure we carry out in Step~\ref{inner:process} is mainly to get around the non-linearity of flow routing, so we can essentially keep the work done in the earlier step on spreading out the supply of edge-bundles in $X$.

In Step~\ref{inner:process}, we get preflow $g_i$ from the preflow $f_i$ for each $i$, and the union $\sum_i g_i$ is also a preflow. This preflow must be source-feasible with respect to a implicit source function (i.e. if we reverse the preflow $\sum_i g_i$) which we call $\Delta_0$ and that we only use for the analysis and do not need to compute. It is different from the $\Delta_X$ in Step~\ref{inner:certify}:  If we start with source function $\Delta_0$, and route according to $\sum_i g_i$, we would have $\Delta_X(v)$ supply ending at $v$. Note in the actual algorithm, we only need to compute $g_i$'s as the supply ending at vertices, i.e. $g_i(v)$'s, but not the actual routing, i.e. $g_i(u,v)$'s, as long as we know there is a valid routing that ends with the $g_i(v)$'s. 

\begin{mylemma}
\label{lemma:D0}
$50m_C\geq |\Delta_0(\cdot)| = |\Delta_X(\cdot)| \ge 4m_C$.
\end{mylemma}
\begin{proof}
By construction $\Delta_0(\cdot)$ is the source function of a preflow, and $\Delta_X(\cdot)$ is the supply ending at vertices after the preflow. Thus $|\Delta_0(\cdot)| = |\Delta_X(\cdot)|$. 

In Step~\ref{inner:process}, the amount of supply that $g_i$ keeps from $f_i$ is $\frac{1}{200\alpha}$ fraction of the non-excess supply originating from any edge-bundle in $X=\cup_{i=1}^{5000\alpha} X_i$. As any edge-bundle in $X$ has at least $1.6s$ supply routed to sinks, and we have at least $\frac{5000\alpha m_C}{10s}$ edge-bundles in $X$ by Lemma~\ref{lemma:s-free}, in total we keep at least $\frac{1.6s}{200\alpha}\frac{5000\alpha m_C}{10s}=4m_C$ supply. By construction, $\Delta_X$ has all this supply, i.e., $|\Delta_X(\cdot)| \ge 4m_C$.

The upperbound of $50m_C$ is because each $g_i$ has at most $\frac{2m_C}{200\alpha}$ total supply by scaling $f_i$, and the $5000\alpha $ groups in total make it at most $50m_C$ total supply in $\Delta_X(\cdot)$.
\end{proof}

Now we define a flow problem $\Pi$, where the source function is $\Delta_0$, each vertex $v$ is a sink of capacity $d(v)$, and edges have capacity $U = \frac{s}{10\alpha \delta}$. We first analyse $\Pi$ and then use it to analyze Step 5 in the subsequent lemma.
\begin{mylemma}
\label{lemma:s-strong}
For any feasible preflow of $\Pi$, the set $B$ of vertices with their sink capacities saturated, is $\cc s$-splittable.
\end{mylemma}
\begin{proof}
Consider any cut $S$ such that $\partial(S)\leq \alpha \delta$, $s_0\leq \vol(S) \leq s$. We will bound the total amount of supply that can end in $S$ for any feasible preflow of $\Pi$. 

We first look at the amount of supply that starts in $S$. The preflow $f_i$ starts with source function $\Delta_{Y_i,2s}$, so if we examine our construction of $g_i$ from $f_i$, we can mimic the changes on $\Delta_{Y_i,2s}$ to obtain the source function of $g_i$. Thus, the source function $\Delta_0$ can be obtained equivalently as follows: (i) We start with $\frac{2s}{200\alpha}$ supply (corresponding to the scaling) at the center of each edge-bundle in $X$ (corresponding to only keeping supply orginating from edge-bundles in $X$); (ii) carry out the initial spread-outs; (iii) and then remove some supply (corresponding to the removal of excess supply in $f_i(\cdot)$). Then it is clear we can bound the amount of supply that $\Delta_0$ has in $S$ by the amount of supply that would have been in $S$ without the Step (iii).

Since all edge-bundles in $X$ are $s$-free, if any edge-bundle has its center $v$ in $S$, at least $\frac{1}{4}$ of its $\frac{\delta}{10}$ edges cross $(S,\bar{S})$. As the cut-size is $\alpha\delta$, among all edge-bundles in $X$, at most $40\alpha$ of them have their centers in $S$, which means at most $\frac{2s}{200\alpha}\cdot 40\alpha=0.4s$ supply can be in $S$ before all the initial spread-outs. Since $X$ is a subset of $(\alpha \gamma,\frac{\delta}{10})$-sparse set $Y$, $X$ is also $(\alpha \gamma,\frac{\delta}{10})$-sparse. Thus the initial spread-outs push at most $\frac{2s}{200\alpha}/(\frac{\delta}{10})=\frac{s}{10\alpha \delta}$ supply along each edge. Thus, as the cut-size of $S$ is $\alpha \delta$, at most an additional $\frac{s}{10\alpha \delta} \cdot \alpha \delta = {\frac{s}{10}}$ supply can end in $S$ after the initial spread-outs. Thus in total, the source function $\Delta_0$ can have at most $0.4s+0.1s=0.5s$ supply starting in $S$.

Subsequently any valid preflow pushes at most $\frac{s}{10\alpha \delta}$ supply along each edge due to the edge capacity constraints in $\Pi$, so an additional $\frac{s}{10}$ supply can be routed into $S$ by the preflow. In total that means at most $\cc s$ supply can end in any such set $S$.

Now consider $B$, the set of all $v$ that receives at least $d(v)$ supply. We must have $\vol_C(B\cap S)\leq \cc s$ for any set $S$ such that $\partial(S)\leq \alpha \delta$, $s_0\leq \vol(S) \leq s$. This is enough to certify that $B$ is $\cc s$-splittable, since $B$ is already $s$-splittable as a sub-component of $C$ (so no need to consider any $S$ with $\vol(S)>s$).

This is equivalent to the definition of $B$ (as an induced sub-component) being $\cc s$-splittable, as we are working inside a connected component $C$ of $H$, so $vol_B(B\cap S)\leq \vol_C(B\cap S)$.
\end{proof}

We now finish the proof of Theorem~\ref{thm:middle-main} by showing the following lemma.
\begin{mylemma}
Step~\ref{inner:certify} will in time $O(m_C \alpha \ln m_G\ln \ln m_G)$ either certify that the entire component $C$ is $\cc s$-splittable, i.e. Case $(3)$ of Theorem~\ref{thm:middle-main}, or find a subset $A$ with 
\[\Phi(A)\leq \frac{(\log m_C+1-\ceil*{\min(\log \vol_C(A),\log\vol_C(C\setminus A))})}{20\alpha \log m_G}.
\]
In the latter case $A$ has volume $\Omega(m_C)$, and is certified to be $\cc s$-splittable, i.e.,~Case $(2)$ of Theorem~\ref{thm:middle-main}.
\end{mylemma}
\begin{proof}
In Step~\ref{inner:certify} of the inner procedure we run {\em Unit-Flow} of Section~\ref{sxn:unit-flow} with inputs $G=C$, source function $\Delta_X$, $U=\frac{s}{20\alpha\delta}$, $h=1000\alpha\ln m_G\ln\ln m_G$, and $w=25$. Note that it fulfills the assumptions on inputs of {\em Unit-Flow} in Theorem~\ref{thm:unit-flow} as $h\gg \ln m_G$, $w \ge 2$ and,
by the construction of $\Delta_X$, which removes all excess supply from all preflows $f_i$, it holds that $\Delta_X(v)\leq \frac{d(v)}{200\alpha}5000\alpha=25d(v) = w d(v)$ for all $v$. Let $f$ be the pre-flow returned by the {\em Unit-Flow} invocation.

Recall $\sum_i g_i$ is source-feasible with respect to the source function $\Delta_0(\cdot)$, and by routing according to $\sum_i g_i$, the supply ending at each vertex $v$ is $\Delta_X(v)$. Essentially $f$ resumes the routing by having $\Delta_X(\cdot)$ as source-function, so we can piece $\sum_i g_i$ and $f$ together, and obtain a preflow $f^*$ that is source-feasible with respect to $\Delta_0(\cdot)$.

To show $f^*$ is a feasible preflow for the flow problem $\Pi$, we need to further show $f^*$ respects the $\frac{s}{10\alpha\delta}$ edge capacity of $\Pi$. From Step~\ref{inner:cut} of the inner procedure, we have each preflow $f_i$ using at most $\frac{s}{1000
\alpha \delta}$ edge capacity, thus by construction $\sum_i g_i$ uses edge capacity of at most $\frac{s}{1000\alpha\delta}\frac{5000\alpha}{200\alpha}=\frac{s}{40\alpha\delta}$. As in the {\em Unit-Flow} invocation we use edge capacity $U=\frac{s}{20\alpha \delta}$, the preflow $f^*$, as a union of $\sum_i g_i$ and $f$, routes at most $\frac{3s}{40\alpha\delta}$ supply on each edge, and is thus a feasible preflow of $\Pi$.

As $f^*$ is a valid preflow of $\Pi$, we know from Lemma~\ref{lemma:s-strong} the set $B$, containing all vertices $v$ receiving at least $d(v)$ supply, is $\cc s$-splittable, and so is any subset of $B$. Since $f^*$ appends $f$ after $\sum_i g_i$, the supply ending at vertices is given by $f(\cdot)$, so $B=\{v|f(v)\geq d(v)\}$.

In Lemma~\ref{lemma:D0} we showed that the total supply $|\Delta_X(\cdot)|$ for Step~\ref{inner:certify} is at least $4m_C$. As all the vertices can absorb only $2m_C$ supply in total, our invocation of {\em Unit-Flow} won't end with case (1) of Theorem~\ref{thm:unit-flow}. If $f$ returned by {\em Unit-Flow} fulfills Case $(2)$ of Theorem~\ref{thm:unit-flow}, i.e. all vertices get at least $d(v)$ supply, we are guaranteed that $C=B$ is $\cc s$-splittable. 

On the other hand if {\em Unit-Flow} returns a set $A$ as in Case $(4)$ of Theorem~\ref{thm:unit-flow}, we show 
(a) $\Phi(A)\leq \frac{(\log m_C+1-\ceil*{\min(\log \vol_C(A),\log\vol_C(C\setminus A))})}{20\alpha \log m_G}$,  (b) $A$ is certified to be $\cc s$-splittable, and (c) $\vol_C(A)$ is $\Omega(m_C)$. This implies that $A$ satisfies the conditions of Case $(2)$ of Theorem~\ref{thm:middle-main}.

(a) The property of conductance follows directly from Case $3(b)$ of Theorem~\ref{thm:unit-flow}: As we use $h=1000\alpha\ln m_G\ln\ln m_G, w=25, U=\frac{s}{20\alpha\delta}\geq \frac{s_0}{20\alpha \delta}\geq \alpha \gamma\geq c_2\alpha^2 \ln m_G$.
(b) 
We know any vertex $v\in A$ receives at least $d(v)$ supply, so  $A \subseteq B$ is $\cc s$-splittable.
(c) As any vertex $v\notin A$ receives at most $d(v)$ supply, any vertex $v\in A$ receives at most $25d(v)$ supply, and since there is at least $4m_C$ total supply, we must have $25 \vol_C(A) + (2m_C - \vol_C(A)) \ge 4m_C$, which implies that $\vol_C(A)\geq  \frac{2}{24} m_C = \Omega(m_C)$. Thus $A$ satisfies all the conditions and the proof of the lemma is complete.
\end{proof}
The runtime of case (2) and (3) of Theorem~\ref{thm:middle-main} is $O(m_C\alpha \ln m_G\ln\ln m_G)$, as that's the total runnning time of the steps involved.

\vspace{-0.1in}
\section{Running time analysis}
\label{sxn:runtime}
\begin{theorem}
The minimum cut in a $\alpha$-balanced simple directed graph with $m$ edges can be computed in time $O(\alpha^2 m \ln^{2}m\ln\ln^2 m+\alpha^4 m \ln^{2}m)$.
\end{theorem}
\begin{proof}
To compute the min cut of a $\alpha$-balanced simple directed graph $G$ with $m_G$ edges, we first construct $\GG$ as discussed earlier, and use Gabow's min-cut algorithm~\cite{Gabow91} on $\GG$. 
We start with the runtime to construct $\GG$. Recall that we use the K-T framework (Algorithm~\ref{alg:KT}) with our flow based inner procedure (Algorithm~\ref{alg:inner}). By Lemma~\ref{lemma:outer-halved}, $m_{\GG}$ decreases geometrically across iterations of the outer loop. As the runtime of each outer loop iteration will be $\Omega(m_{\GG})$, the first iteration will dominate asymptotically, so we focus on the first iteration, with $m_{\GG}=m_G$.

The operations outside of the middle loop in total take $O(m_{\GG})$ time. To analyze the middle loop, we look at each invocation of the inner loop. Informally we will charge the runtime to edges such that an edge is charged when it lies in the smaller side of a cut, or the splittability of its component drops by a constant factor. More specifically, given an $s$-splittable component $C$ in $H$, we have three cases by Theorem~\ref{thm:middle-main}.

\begin{enumerate}[label={(\arabic*)}]
\item Find a cut $(A,C\setminus A)$ with $\vol_C(A)\leq m_C$ in time 
\[
O(\alpha^2\vol_C(A)\ln ({m_C}/{\vol_C(A)})\ln m_G\ln\ln^2 m_G).
\] We can charge $O(\alpha^2\ln ({\vol_C(C)}/{\vol_C(A)}) \ln m_G\ln\ln^2 m_G)$ to each edge in $A$.

Consider the total charge to any edge by all invocations of inner procedure of this case. The edge is charged when it falls in the smaller side of a cut. The $\ln\frac{\vol(C)}{\vol(A)}$ part will telescope, so in total each edge is charged $O(\alpha^2\ln^2 m_G \ln \ln^2 m_G)$. 
\item Find a subset $A$ in $C$ where $\vol_C(A)$ is $\Theta(m_C)$, and $A$ is certified to be $\cc s$-splittable. The runtime is $O(\alpha^2 m_C\ln m_G\ln\ln m_G)$. We can charge 

$O(\alpha^2\ln m_G\ln\ln m_G)$ to each edge in $A$.

Over all invocations of inner procedure of this case, any edge is charged at most $O(\ln m_{\GG})$ times, since the splittability of its component decreases geometrically each time we charge the edge. In total each edge is charged $O(\ln^2 m_G\ln\ln m_G)$.
\item Certify the entire component $C$ is $\cc s$-splittable. The time we spend in this case is $O(\alpha^2 m_C\ln m_G\ln\ln m_G)$. We use the same argument as in case (2) above.
\end{enumerate}
In total, we can charge the runtime of the middle loop to the edges in $\GG$, and each edge is charged $O(\alpha^2\ln^2 m_G\ln\ln^2 m_G)$, so the runtime is $O(\alpha^2 m_G\ln^2 m_G\ln\ln^2 m_G)$.

At the end, we get a multi-graph $\GG$ with $O(\frac{\alpha^4 m_G\ln m_G}{\delta})$ edges, preserving all non-trivial min cuts of $G$. We use Gabow's min-cut algorithm~\cite{Gabow91} on $\GG$. Gabow's algorithm works on directed multi-graphs, and takes time $O(\lambda m_{\GG} \ln m_{\GG})$ on $\GG$, where $\lambda$ is the size of the min cut. With our bound on $m_{\GG}$, as well as $\lambda \le \delta$, the runtime of Gabow's algorithm is thus $O(\alpha^4 m_G\ln^2 m_G)$. Together with the runtime to construct $\GG$, we get the stated runtime bound in the theorem.
\end{proof}

\bibliographystyle{siamplain}
\bibliography{min-cut}
%
%

\end{document}